\newcommand{\RCMRunningTime}{$O\left( (kn + m)^2 \log\left(kn +
m\right)\right)$\xspace}
\newcommand{\dimacs}{\textsc{Dimacs}\xspace}
\newcommand{\sparsemc}{\textsc{Sparse MC}\xspace}
\newcommand{\ogdf}{\textsc{Ogdf}\xspace}
\newcommand{\stress}{\textsc{Stress}\xspace}
\newcommand{\SN}{\ensuremath{\mathcal S_0}\xspace}
\newcommand{\SF}{\ensuremath{\mathcal S_{512}}\xspace}
\newcommand{\RN}{\ensuremath{\mathcal R_0}\xspace}
\newcommand{\RF}{\ensuremath{\mathcal R_{512}}\xspace}
\newcommand{\WF}{\ensuremath{\mathcal W_{512}}\xspace}
\definecolor{cellbg}{rgb}{0.88,0.89,0.93}
\definecolor{cellbg2}{rgb}{0.92,0.96,0.88}
\newcommand{\good}{\cellcolor{cellbg2}}
\newcommand{\bad}{\cellcolor{cellbg}}
\title{Geometric Crossing-Minimization -- A Scalable Randomized Approach}
\titlerunning{Geometric Crossing Minimization}
\author{Marcel Radermacher}{Department of Computer Science, Karlsruhe  Institute of Technology,
	Germany}{radermacher@kit.edu}{}{}
\author{Ignaz Rutter}{Department of Computer Science and Mathematics, University
	of Passau, Germany}{rutter@fim.uni-passau.de}{}{}
\authorrunning{M. Radermacher and I. Rutter}
\keywords{Geometric Crossing Minimization, Randomization, Approximation, VC-Dimension, Experiments}
\begin{document}



\maketitle

\begin{abstract}
	We consider the minimization of edge-crossings in geometric drawings of graphs
	$G=(V, E)$, i.e., in drawings where each edge is depicted as a line segment.
	The respective decision problem is $\cNP$-hard~\cite{bienstock1991some}.
	Crossing-minimization, in general, is a popular theoretical research topic;
	see Vrt'o~\cite{crBibVrto}. In contrast to theory and the topological setting,
	the geometric setting did not receive a lot of attention in practice. Prior
	work~\cite{doi:10.1137/1.9781611975055.12} is limited to the
	crossing-minimization in geometric graphs with less than $200$ edges.  The
	described heuristics base on the primitive operation of moving a single
	vertex $v$ to its \emph{crossing-minimal position}, i.e., the position in
	$\R^2$ that minimizes the number of crossings on edges incident to $v$. 

	In this paper, we introduce a technique to speed-up the computation by a
	factor of $20$. This is necessary but not sufficient to cope with graphs with
	a few thousand edges.  In order to handle larger graphs, we drop the condition
	that each vertex $v$ has to be moved to its crossing-minimal position and
	compute a position that is only optimal with respect to a small random subset
	of the edges. In our theoretical contribution, we consider drawings that
	contain for each edge $uv \in E$  and each position $p \in \R^2$ for $v$
	$o(|E|)$ crossings.  In this case, we prove that with a random subset of the
	edges of size $\Theta(k \log k)$ the \emph{co-crossing} number of a degree-$k$
	vertex $v$, i.e., the number of edge pairs $uv \in E, e \in E$ that do
	\emph{not} cross, can be approximated by an arbitrary but fixed factor
	$\delta$ with high probability. In our experimental evaluation, we show that
	the randomized approach reduces the number of crossings in graphs with up to
	$13\,000$ edges considerably. The evaluation suggests that depending on the
	degree-distribution different strategies result in the fewest number of
	crossings.	
\end{abstract}


\section{Introduction}
The minimization of crossings in geometric drawings of graphs is a fundamental
graph drawing problem.  In general the problem is
$\cNP$-hard~\cite{garey1983crossing,bienstock1991some} and has been studied from
numerous theoretical perspectives; see Vrt'o~\cite{crBibVrto}.  Until recently
only the topological setting, where edges are drawn as topological curves, has
been considered in
practice~\cite{DBLP:journals/algorithmica/GutwengerMW05,bcg-hgdv-13,chimani_et_al:LIPIcs:2016:5922}.
In our previous paper~\cite{doi:10.1137/1.9781611975055.12} we describe
geometric heuristics that compute straight-line drawings of graphs with
significantly fewer crossings compared to common energy-based layouts. One of
the heuristics is the \emph{vertex-movement approach} that iteratively moves a
single vertex $v$ to its \emph{crossing-minimal position}, i.e., a position
$p^\star$ so that crossings of edges incident to $v$ are minimized.
Unfortunately, the worst-case running time to compute this position is
super-quadratic in the size of the graph as the following theorem states.

\begin{theorem}[Radermacher et al.~\cite{doi:10.1137/1.9781611975055.12}]
	\label{theorem:opt:placement}
	The crossing-minimal position of a degree-$k$ vertex $v$ with respect to a
	straight-line drawing $\Gamma$ of a graph $G=(V,E)$ can be computed in
	\RCMRunningTime time, where $n = |V|, m=|E|$.
\end{theorem}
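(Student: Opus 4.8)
The plan is to reduce the optimization to finding a minimum-depth face in an arrangement of few lines. First I would give a geometric characterization of the positions at which a single incident edge crosses a single fixed edge. Fix a neighbor $u_i$ of $v$ and an edge $e = ab \in E$ that is not incident to $v$. The segment $\overline{p u_i}$ crosses $\overline{ab}$ exactly when $p$ lies in the convex wedge with apex $u_i$ spanned by the rays $u_i \to a$ and $u_i \to b$ and, in addition, on the far side of the line through $a$ and $b$; call this region $R_{i,e}$. The crucial observation is that $R_{i,e}$ is the intersection of three half-planes whose bounding lines are $\ell(u_i,a)$, $\ell(u_i,b)$, and $\ell(a,b)$, so the indicator $[p \in R_{i,e}]$ can change only when $p$ crosses one of these lines. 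Writing the number of crossings on edges incident to $v$ as $\mathrm{cr}(p) = \sum_{i,e} [p \in R_{i,e}]$ (discarding the $O(1)$-per-pair degenerate cases of shared endpoints and collinearities), the whole function is constant on the faces of the arrangement $\mathcal A$ of all lines that bound some $R_{i,e}$.

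Second, I would bound the number of these lines. The wedge boundaries $\ell(u_i,w)$ range over a neighbor $u_i$ (there are $k$) and a vertex $w \in V$ (there are $n$), giving $O(kn)$ lines; the edge lines $\ell(a,b)$ give $O(m)$ lines. Hence $\mathcal A$ consists of $L = O(kn + m)$ lines, has $O(L^2)$ faces, and can be built in $O(L^2)$ time by the standard incremental construction. Since $\mathrm{cr}$ is piecewise constant with the faces of $\mathcal A$ as pieces, it suffices to evaluate one representative point per face and return the face minimizing $\mathrm{cr}$.

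Third comes the incremental evaluation, which I expect to be the main obstacle. I would compute $\mathrm{cr}$ on one reference face directly in $O(km)$ time and then traverse the dual graph of $\mathcal A$, updating $\mathrm{cr}$ whenever the traversal crosses an arrangement edge. Crossing an edge that lies on a line $\ell$ toggles membership in exactly those regions $R_{i,e}$ for which $\ell$ is a bounding line and whose relevant boundary portion -- a segment or ray on $\ell$ whose endpoints are vertices of $\mathcal A$ -- contains the crossed edge. The difficulty is that a single line may bound many regions, so I cannot afford to inspect them one by one at each crossing. Instead I would, as a preprocessing step, attach to each line $\ell$ the list of boundary intervals contributed by the regions it bounds, each tagged with a $\pm 1$ sign recording whether the traversal enters or leaves the region; since every region contributes three such intervals, there are $O(km)$ intervals in total. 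Sorting the interval endpoints along each line together with the arrangement vertices on it, and sweeping once, yields for every arrangement edge $\epsilon$ the net change $\Delta(\epsilon)$ of $\mathrm{cr}$ across $\epsilon$ as a prefix sum of signs. This precomputation costs $O(L^2 \log L)$, after which the traversal spends $O(1)$ per arrangement edge and visits all $O(L^2)$ faces in $O(L^2)$ time.

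Finally, I would collect the running time: $O(L^2 \log L) = O\left((kn+m)^2 \log(kn+m)\right)$, dominated by the arrangement construction and the per-edge delta computation. The remaining care is the sign bookkeeping -- getting the enter/leave orientation of each interval consistent with the chosen traversal direction across $\ell$ -- together with the correct handling of wedges that degenerate when $u_i, a, b$ are collinear and of regions sharing endpoints with $v$ or $u_i$, all of which affect only lower-order terms.
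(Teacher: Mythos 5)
Your proposal is correct and follows essentially the same route as the paper's (cited) proof: the same characterization of the per-pair crossing region as an intersection of three half-planes, the same decisive count of only $O(kn+m)$ distinct supporting curves obtained by sharing the ray through $u_i$ and a vertex $w$ among all edges incident to $w$ (the paper works with the visibility-region boundaries $\Bd(u,e)$, i.e.\ segments and rays, where you use their full supporting lines --- a constant-factor difference), and the same search for the minimum via a traversal of the dual of the arrangement with incremental updates of the crossing count. Your per-line interval bookkeeping with signed prefix sums is a clean way to realize the $O(1)$-per-dual-edge update, and the total $O\left((kn+m)^2 \log(kn+m)\right)$ matches the stated bound.
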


This is not only a theoretical upper bound on the running time but is also a
limitation that has been observed in practice.
The implementation we used previously requires considerable time to compute
drawings with few crossings. For this reason we were only able evaluate our
approach on graphs with at most $200$ edges.  For example, on a class of graphs
that have $64$ vertices and $196$ edges our implementation already required on
average about 35 seconds to compute a drawing with few crossings.

Energy-based methods are common and well engineered tools to draw
graphs~\cite{DBLP:journals/corr/abs-1201-3011}. For example, the aim of
\emph{Stress Majorization} (or simply \stress) is to compute a drawing such that
the Euclidean distance of each two vertices corresponds to their
graph-theoretical distance~\cite{Gansner2005}.  The algorithm has been
engineered to handle graphs with up to $10^6$ vertices and $3\cdot 10^6$
edges~\cite{7889042}.  Kobourov~\cite{DBLP:journals/corr/abs-1201-3011} claimed
that \stress tends to minimize the number of crossings. In our previous
experimental evaluation~\cite{doi:10.1137/1.9781611975055.12} we demonstrated
that the statement is not true for a varied set of graph classes.

Fabila-Monroy and L\'opez~\cite{JGAA-328} introduced a randomized algorithm to
compute a drawing of $K_n$ with a small number of crossings. Many best known
upper bounds on the rectilinear crossing number of $K_n$, for $44 \leq k \leq
99$, are due to this approach~\cite{RCNProject}. The algorithm iteratively
updates a set $P$ of $n$ points, by replacing a random point $p \in P$ by a
random point $q$ that is close to $p$, if $q$ improves the number of crossings.
Since the number of crossings of $K_n$ is in $\Theta(n^4)$, the bottleneck
of their approach is the running time for counting the number of crossings
induced by $P$. 
A similar randomized approach has been used to maximize the smallest
crossing angle in a straight-line
drawing~\cite{DBLP:conf/gd/BekosFGH0SS18,DBLP:conf/gd/DemelDMRW18}. The approach
iteratively moves vertices to the best position within a random point set. 

\subparagraph*{Contribution.} The main contribution of this paper is to engineer
the \emph{vertex-movement approach} for the minimization of crossings in
geometric drawings described in \cite{doi:10.1137/1.9781611975055.12} to be
applicable on graphs with a few thousands vertices and edges. 
\begin{enumerate}
	\item In Section~\ref{sec:impl} we introduce so-called \emph{bloated duals of
		line arrangements}, a combinatorial technique to construct a dual
		representation of general line arrangements. In our application this results
		in an overall speed-up of about a factor of $20$ in comparison to the recent
		implementation. This speed-up is necessary but not sufficient to handle
		graphs with a few thousands vertices and edges.
	\item In Section~\ref{sec:random_sampling} we demonstrate that taking a small
		random subset of the edges is sufficient to compute drawings with few
		crossings.  Moreover, in Section~\ref{sec:apx_cocrossings} we prove that
		under certain conditions the randomized approach is an approximation of the
		co-crossing number of a vertex, with high probability.
	\item Based on the insights of the evaluation in
		Section~\ref{sec:eval_restricted}, we introduce a weighted sampling
		approach. A comparison to a restrictive approach of sampling points suggests
		that the degree-distribution of the graph is a good indicator to decide
		which approach results in fewer crossings.
	\item Overall, our experimental evaluation shows that we are now able to
		handle graphs with $12\,000$ edges, which are $60$ times more than the
		graphs that have been considered in the evaluation in
		\cite{doi:10.1137/1.9781611975055.12}.
\end{enumerate}

\section{Preliminaries}
\label{sec:prelim}

\begin{figure}
	\centering
	\includegraphics{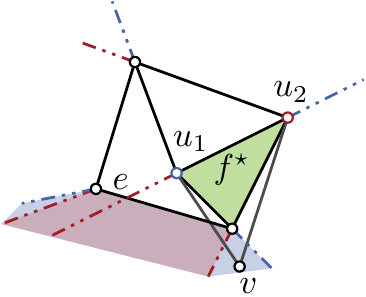}
	\caption{The black, blue and red segments show the arrangement
	$\Arr(\Gamma,v)$ of the black drawing $\Gamma$. The blue and red region show
the complement of the visibility regions of $u_1$ and $u_2$, respectively, and
the edge~$e$. The green region is crossing minimal.}
\label{fig:crossing_min}
\end{figure}

We repeat some notation from~\cite{doi:10.1137/1.9781611975055.12}.  Let
$\Gamma$ be a straight-line drawing of a $G=(V, E)$.  Denote by $N(v) \subseteq
V$ the set of neighbors of $v$ and by $E(v) \subseteq E$ the set of edges
incident to $v$.  For a vertex $v \in V$, denote by $\moveVertex{v}{p}$ the
drawing that is obtained from $\Gamma$ by moving the vertex $v$ to the point
$p$. We denote the number of crossings in a drawing $\Gamma$ by $\Cr(\Gamma)$,
the number of crossings on edges incident to $v$ by $\Cr(\Gamma, v)$, and we
refer with $\Cr(\Gamma, e, f)$ to the number of crossings on two edges $e$ and
$f$ in $\Gamma$, i.e., $\Cr(\Gamma, e, f) \in \{0,1\}$ if $e\not=f$.
For a point $u$ and a segment $e$, denote by $\Vr(u, e)$ the \emph{visibility
region of $u$ and $e$}, i.e., the set of points $p \in \R^2$ such that the
segment $up$ and $e$ do not intersect. Moreover, let $\Bd(u, e)$ be the boundary
of $\Vr(u,e)$.
Let $\Arr(\Gamma, v)$ be the arrangement over all boundaries $\Bd(u, e)$ for
each neighbor $u \in N(v)$ of $v$ and each edge $e \in E \setminus E(u)$; see
\cref{fig:crossing_min}. The arrangement has the property that two points
$p$ and $q$ in a common cell of $\Arr(\Gamma, v)$ induce the same number of
crossings for $v$, i.e., $\Cr(\moveVertex{v}{p}, v)= \Cr(\moveVertex{v}{q},
v)$~\cite{doi:10.1137/1.9781611975055.12}.  Thus, the computation of a crossing
minimal position $p^\star$ reduces to finding a \emph{crossing-minimal region}
$f^\star$ in $\Arr(\Gamma,v)$.

For our experiments, we used two different compute servers. Both
systems ran with an openSUSE Leap~15.0 operating system.  All algorithms were
compiled with \texttt{g++} version 7.3.1 with optimization mode \texttt{-O3}.
System~1 was used for  running time experiments, i.e., for the experiments
evaluated in \cref{sec:bd_runtime} and in \cref{sec:eval_restricted}.
System~2 is used for the experiments evaluated in \cref{sec:eval_weighted}.
\begin{description}
	\item[System 1] Intel Xeon(tm) E5-1630v3 processor clocked at 3.7~GHz, 128\,GB
		RAM.
	\item[System 2] Two Intel Xeon(tm) E5-2670 CPU processors clocked at
		2.6~GHz, 64\,GB RAM.  \end{description}

\section{Efficient Implementation of the Crossing-Minimal Position}
\label{sec:impl}

The vertex-movement approach iteratively moves a single vertex to its
crossing-minimal position. The running time of the overall algorithm crucially
depends on an efficient computation of this operation.
Therefore the aim of this section is to provide an efficient implementation of
the crossing-minimal position of a vertex. Our previous
implementation~\cite{doi:10.1137/1.9781611975055.12} heavily relies on
CGAL~\cite{cgal:eb-17a}, which follows an exact computations paradigm and uses
exact number types to, e.g., represent coordinates and intermediate results.
This helps to ensure correctness but considerably increases the running time of
the algorithms. We introduce an approach to compute the crossing-minimal
position that drastically reduces the usage of exact computations.

Computing a crossing-minimal position of a vertex $v$ is equivalent to computing
a crossing-minimal region $f^\star$ in the arrangement $\Arr(\Gamma, v)$.  The
region $f^\star$ of a vertex $v$ can be computed by a breadth-first search in
the dual graph $\Arr(\Gamma, v)^\star$. Thus, the time-consuming steps to
compute $f^\star$ are to construct the arrangement $\Arr(\Gamma,v)$ and then to
build the dual $\Arr(\Gamma, v)^\star$.  Instead of computing the dual
$\Arr(\Gamma,v)^\star$ we construct a so-called \emph{bloated dual}
$\Arr(\Gamma, v)^+$. The advantage of this approach is that it suffices to
compute the set of intersecting segments in $\Arr(\Gamma, v)$ to construct
$\Arr(\Gamma,v)^+$ and it is not necessary to compute the arrangement
$\Arr(\Gamma, v)$ itself, i.e., the exact coordinates of each intersection.

\begin{figure}[tb]
	\centering
	\begin{subfigure}[b]{.3\textwidth}
		\includegraphics[page=1]{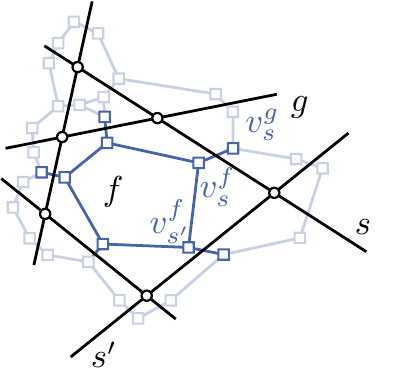}	
		\caption{}
		\label{fig:bloated_dual:def}
	\end{subfigure}
	\begin{subfigure}[b]{.3\textwidth}
		\includegraphics[page=2]{fig/bloated_dua.pdf}	
		\caption{}
		\label{fig:bloated_dual:segment_edges}
	\end{subfigure}
	\begin{subfigure}[b]{.3\textwidth}
		\includegraphics[page=3]{fig/bloated_dua.pdf}	
		\caption{}
		\label{fig:bloated_dual:face_edges}
	\end{subfigure}

	\caption{(a)~Bloated dual $\Arr^+$ (blue) of an arrangement $\Arr$ (black).
		Inserting edges dual to a segment $s$ (b) and within a face (c).}
\end{figure}

Let $S$ be a set of line segments and let $\Arr$ be the arrangement of $S$.  A
\emph{bloated dual of $\Arr$} is a graph $\Arr^+$ that has the following
properties (compare \cref{fig:bloated_dual:def}):
\begin{enumerate}[(i)] \item each edge $e$ incident to a face $f$ corresponds to a
		vertex $v_e^f$ in $\Arr^+$,
	\item if two distinct segments $s, s' \in S$ of $f$ have a common intersection on the
		boundary of $f$, then $v_{s}^fv_{s'}^f \in E(\Arr^+)$, and
	\item for two distinct faces $f,g$ sharing a common segment~$s$, there is an edge
		$v_s^fv_s^g \in E(\Arr^+)$.	
\end{enumerate} 
Note that given a crossing-minimal face and $v^f_{s_0}$, the geometric
representation of $f$ has to be computed in order to compute a crossing-minimal
position $p \in f$.  Further a vertex $v^f_{s_0}$ belongs to a cycle
$v^f_{s_0}, v^f_{s_1}, \dots v^f_{s_k}$. Then, the geometric representation of
the boundary of $f$ can be computed by intersecting the segments $s_{i}$ and
$s_{i+1}$, where we set $k+1=0$. In the following, we will show that it is sufficient
to know the order in which the segments in $S$ intersect to construct the
bloated dual. Thus, exact number types only have to be used to determine the
order of two segments whose intersections with a third segment $s$ have a small
distance on $s$.

We construct the bloated dual of $\Arr$ in two steps. First, we insert all
vertices $v_s^f, v^g_{s}$ and the corresponding edge $v_s^fv_s^g$. In the second
step, we insert the remaining edges $v_s^fv_{s'}^f$ within a face $f$. For a
compact description we assume that no intersection point of two segments is an
endpoint of a segment. We define the \emph{source of $s$}  and \emph{target of
$s$} to be the lexicographically smallest and largest point on $s$,
respectively. We direct each segment $s$ from its source to its target.

Let $p_1, p_2, \dots, p_l$ be the intersection points on a segment $s$ in
lexicographical order. These intersection points correspond to a set of left
faces $f^L_1, f^L_2, \dots, f^L_{l+1}$ and to a set of right faces $f^R_1,
f^R_2, \dots, f^R_{l+1}$, such that $f^L_i$ and $f^R_i$ share parts of their
boundary; see \cref{fig:bloated_dual:segment_edges}. Thus, we can associate
a set of vertices $v^L_i, v^R_i, 2 \leq i \leq l+1$, with $s$, and add the edges
$v^L_{i}v^R_i$ to $\Arr^+$. Note that only the order and not the actual
coordinates of the points $p_1, \dots, p_l$ has to be known to insert the edges.
Thus, given the set of segments that intersect $s$, an exact number type is only
necessary to determine the order of two segments $s_i$ and $s_j$ whose
intersection points $p_i$ and $p_j$ on $s$ have a small distance.

We now add the remaining edges within a face $f$. Let $S' =
\{s_1, \dots, s_k\} \subseteq S$ be the set of segments that intersect $s$ in
$p_i$; see \cref{fig:bloated_dual:face_edges}. The two segments $s^L,s^R \in
S'$ that lie on the boundary of $f^L_i$ and $f^R_{i}$ can be determined as
follows. To find the segment $s^L$, we distinguish two cases.  First, assume
that there exists a segment $s' \in S'$ whose source is left of $s$. Observe
that if there is a segment $s''$ whose target is left of $s$, the segment $s''$
cannot be the segment $s^L$.  Thus, we assume without loss of generality that
all sources of segments in $S^i_s$ are left of $s$. Then a segment $s' \in S'$
is the segment $s^L$ if and only if the segment $s'$ and each segment $s'' \in
S' \setminus \{s'\}$ form a right turn.  Now consider the case that there is no
segment whose source is left of $s$. Then a segment $s'$ is $s^L$ if and only if
the segment $s'$ and each segment $s'' \in S' \setminus \{s'\}$ form a left
turn. The segment $s^R$ can be determined analogously.

\subparagraph*{Implementation Details.}

We give some implementation details which allow us to efficiently implement the
construction of the bloated dual.  We use the index of a vertex to decide
whether it is left or right of $s$, i.e., vertices with an odd index are left of
$s$ and vertices with an even index are right of $s$.  The fact that each vertex
of $\Arr^+$ has degree at most 3 allows us to represent $\Arr^+$ as a
single array $B$ of size $3n$, where $n$ is the number of vertices of $\Arr^+$.
The vertices incident to a vertex $v_i$ occupy the cells $B[3i], B[3i+1]$ and
$B[3i+2]$. Moreover, each pair of segments in $S$ can be handled independently
to construct the bloated dual. This enables a parallelization over the segments
in $S$.

\subsection{Evaluation of the Running Time}
\label{sec:bd_runtime}

\begin{figure}
	\begin{subfigure}[b]{.24\textwidth}
		\includegraphics[page=1]{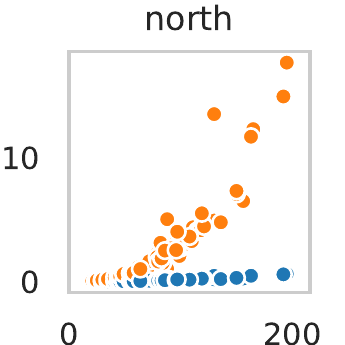}
		\caption{}
	\end{subfigure}
	\begin{subfigure}[b]{.24\textwidth}
		\includegraphics[page=2]{plots/runtime/runtime.pdf}
		\caption{}
	\end{subfigure}
	\begin{subfigure}[b]{.24\textwidth}
		\includegraphics[page=3]{plots/runtime/runtime.pdf}
		\caption{}
	\end{subfigure}
	\begin{subfigure}[b]{.24\textwidth}
		\includegraphics[page=4]{plots/runtime/runtime.pdf}
		\caption{}
	\end{subfigure}

	\caption{Comparing the running time of two approaches (orange \textsc{Precise},
	blue \textsc{Bd}) to compute the crossing
	minimal region. Each point corresponds to a graph $G$. The $x$-axis shows the
	number of edges of $G$. The $y$-axis depicts the running time in seconds to compute the crossing
	minimal regions for all vertices of $G$.} 
	\label{fig:runtime_comparison}
\end{figure}

In this section, we compare the running time of the two approaches to compute
the crossing-minimal region of a vertex. We refer with \textsc{Precise} to the
approach that uses CGAL to compute the crossing minimal region and with
\textsc{Bd} to the approach based on the bloated dual.
In order to compute all intersecting segments, we use a naive implementation of
a sweep-line algorithm~\cite{1675432}. In this approach all segments within a
specific interval are pairwise checked for an intersection. This has the
advantage that the computation is independent of the coordinates of the
intersection.

The experimental setup is as follows. Given a drawing $\Gamma$ of a graph $G$,
we are interested in the running time of moving all vertices of a graph to their
crossing-minimal positions. Therefore, we measure the running time of computing
the crossing-minimal regions of all vertices. In order to guarantee the
comparability of the two approaches, we use the same vertex order and only
compute the crossing-minimal region but do not update the positions of the
vertices.
We use the same set of benchmark graphs used in~\cite{doi:10.1137/1.9781611975055.12}:
\textsc{North}\footnote{\label{fn:north_rome}http://graphdrawing.org/data.html},
\textsc{Rome}\textsuperscript{\ref{fn:north_rome}},
graphs that have \textsc{Community} structure, and
\textsc{Triangulations} on 64 vertices with an additional $10$ random edges.
For each graph class, 100 graphs were selected uniformly at random. We use the
implementation of \stress~\cite{Gansner2005} provided by
\ogdf~\cite{ogdf} (snapshot 2017-07-23) to compute an initial layout of the
graphs.
 

The plots in \cref{fig:runtime_comparison} shows the results of the
experiments. Each point in the plot corresponds to the running time of
computing all crossing-minimal region of a single graph. The plot shows that the
\textsc{Bd} implementation is considerably faster than the \textsc{Precise}
implementation. For each graph class, we achieve on average a speed-up of at
least $20$. The minimum speed-up on the \textsc{North} graphs is $8$. For each graph
class, the speed-up is at least 18 for at least 75 out of 100 instances. 

\section{Random Sampling}
\label{sec:random_sampling}

\begin{figure}
		\centering
		\includegraphics{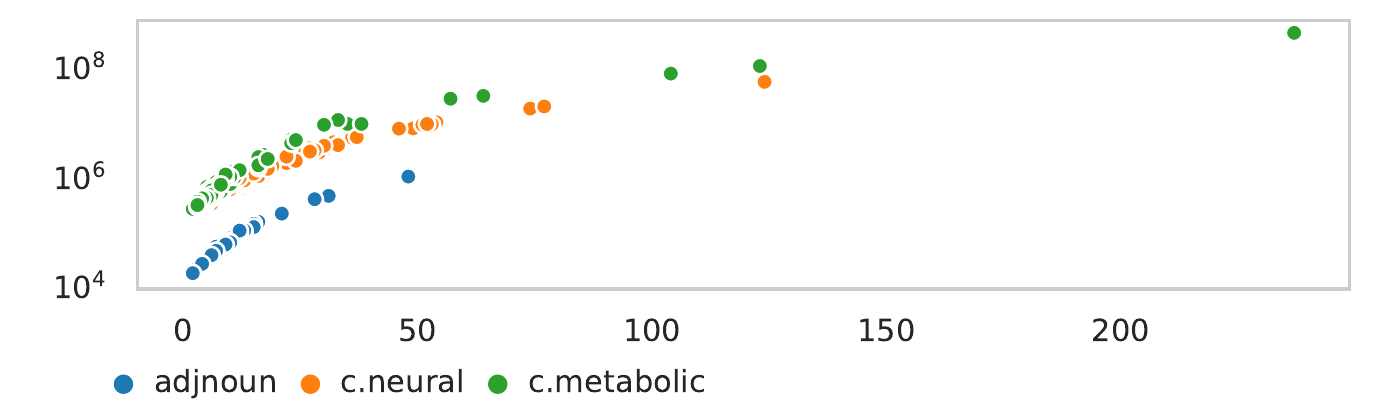}
		\caption{The $x$-axis shows the vertex-degree and the $y$-axis the number of
		intersecting edges in the arrangement $\Arr(\Gamma, v)$. The $y$-axis is in
	$\log$-scale.}
	\label{fig:arr_size}
\end{figure}

The worst-case running time of computing the crossing-minimal region of a vertex $v$
is super-quadratic in the size of the graph, see
\cref{theorem:opt:placement}.  \cref{fig:arr_size} shows the number
of intersecting segment in the arrangement $\Arr(\Gamma, v)$ compared to the
vertex-degree of $v$, for vertices of three selected graphs with at most 2\,133
edges, compare \cref{tab:stats:dimacs:512}. For these graphs the arrangement
already contains up to $440\,685\,519$ intersecting segments. Indeed, we were not
able to compute the number of intersections for all vertices of the graph
c.metabolic, since the algorithm ran out of memory first. Due to the high number
of intersections in graphs with a high number of edges or a large maximum
vertex-degree, it is for these graphs  infeasible to compute a crossing-minimal
position of a vertex. This motivates the following question: Is a small subgraph
of $G$ sufficient to considerably reduce the number of crossings in a given
drawing?

To address this question, we follow the \emph{vertex-movement approach}. Let
$\Gamma_0$ be a drawing of $G$ and let $v_1, v_2, \dots, v_n$ be an ordered
set of the vertices $V$ of $G$. For each vertex $v_i$ we obtain a new drawing
$\Gamma_i$ from the drawing $\Gamma_{i-1}$ by moving $v_i$ to a new position
$p^\star_i$. To compute the new position we consider a \emph{primal} sampling
approach, i.e., a sampling of points in the solution space $\R^2$, and a
\emph{dual} sampling approach, i.e., a sampling of edges that induce constraints
to the solution space.

More formally, we consider the following approach to compute a new position of a
single vertex $v_i$. Let $S_i \subset E$ be a uniform random subset of the edges
of $G$ and let $V(S_i) \subset V$ be the vertices that are incident to an edge
in $S_i$.  The graph $G|_{S_i}= (V(S_i) \cup N(v_i) \cup \{v_i\}, S_i \cup
E(v_i))$ induces a drawing $\Gamma|_{S_i}$ in $\Gamma_{i-1}$. Let $R_i$ be the
crossing-minimal region of $v_i$ with respect to the drawing $\Gamma|_{S_i}$.
Recall that for $S_i = E$ the region $R_i$ has the property that
$\Cr(\moveVertex[\Gamma|_{S_i}]{v_i}{p}, v_i) =
\Cr(\moveVertex[\Gamma|_{S_i}]{v_i}{q}, v_i)$ for any two points $p,q \in R_i$,
compare \cref{sec:prelim}. If $S_i$ is a strict subset of $E$, then $R_i$
does not necessarily have this property anymore. For this reason, let $P_i
\subset R_i$ be a set of uniform random points and let $p_i^\star \in P_i \cup
\{p_i'\}$ be the point that minimizes $\Cr(\moveVertex{v}{p_i^\star},v_i)$,
where $p_i'$ is the position of $v_i$ in $\Gamma_{i-1}$.

This remainder of this section is organized as follows. First, we analyze the dual
sampling from a theoretical perspective (\cref{sec:apx_cocrossings}),
followed by an experimental evaluation that compares the primal to the dual
sampling (\cref{sec:eval_restricted}). Finally, based on the insights
from this evaluation, we introduce in \cref{sec:eval_weighted} a
\emph{weighted} sampling approach that is less restrictive than the dual
sampling.

\subsection{Approximating the Co-Crossing Number of a Vertex}
\label{sec:apx_cocrossings}

In this section we study the dual sampling approach, i.e., the sampling of
edges, with tools introduced in the context of the theory of VC-dimension.  A
thorough introduction into the theory of VC-dimension can be found in
Matou{\v{s}}ek's \emph{Lectures on Discrete
Geometry}~\cite{matouvsek2002lectures}.
For a fixed vertex $v$, a drawing~$\Gamma$ is \emph{$\varepsilon$-well behaved}
if for each point $p \in \R^2$ and each vertex $u\in N(v)$, the edge $uv$
crosses at most $(1-\varepsilon)|E|$ edges in the drawing~$\moveVertex{v}{p}$.
The \emph{co-crossing number} $\coCr(\Gamma, v)$ of a vertex~$v$ is the number
of edge pairs $e \in E$ and $uv \in E$ that do not cross. 
We show that given an $\varepsilon$-well-behaved drawing $\Gamma$ of a graph
$G=(V, E)$ and a degree-$k$ vertex $v$, a random sample $S \subset E$ of size
$\Theta(k \log k)$ enables us to compute a position $q^\star$ whose co-crossing
number is a $(1-\delta)$-approximation of the co-crossing number of a
vertex~$v$.
Note that we are not able to guarantee that a large co-crossing number of a
vertex $v$ implies a small crossing number of $v$.
On the other hand, the co-crossing number is of interest for a variety of
(sparse) graph. For example, drawings that contain many triangles are
$\varepsilon$-well-behaved, since every line intersects at most two segments of
a triangle.

A \emph{set system} is a  tuple $(X,
\mathcal F)$ with a base set $X$ and $\mathcal F \subseteq 2^X$. In the
following, we assume $X$ to be finite.  For some parameters $\varepsilon, \delta
\in (0,1]$, a set $S \subseteq X$ is a \emph{relative $(\varepsilon,
\delta)$-approximation} for the set system $(X, \mathcal F)$ if for each $R \in
\mathcal F$ the following inequality holds. 

\begin{equation}\label{eq:ed_apx}
	\size{ \frac{\size{S \cap R}}{ \size{S}} - \frac{\size{R}}{\size{X}}}
	\leq \delta \max\{ \frac{\size{R}}{\size{X}}, \varepsilon\}
\end{equation}

The proof of the following proposition and of proofs of statements that are
marked with ($\star$) can be found in
\cref{appendix:missing_proofs}.

\begin{restatable}[$\star$]{proposition}{obsrelativeapx}
	\label{obs:relative_apx}
	For $\varepsilon, \delta \in (0, 1]$, let $S$ be an $(\varepsilon,
	\delta)$-approximation of the set system $(X, \mathcal F)$.
	If every $R \in \mathcal F$ has size at least $\varepsilon
	|X|$ then \cref{eq:ed_apx} can be rewritten as follows:
	\[
	(1-\delta) \size{R} \leq |X| \frac{\size{S \cap R}}{ \size{S}} \leq
	(1+\delta) \size{R} .\]
\end{restatable}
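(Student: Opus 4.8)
The plan is to reduce the claim to a direct algebraic rewriting of the defining inequality~\cref{eq:ed_apx} of a relative $(\varepsilon,\delta)$-approximation. I would fix an arbitrary set $R \in \mathcal F$ and recall that, by assumption, $S$ satisfies~\cref{eq:ed_apx} for this particular $R$. The entire argument rests on a single observation: the size hypothesis $|R| \ge \varepsilon |X|$ is exactly the condition under which the maximum on the right-hand side of~\cref{eq:ed_apx} collapses to a single term.

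First I would divide the size hypothesis by $|X|$ to obtain $|R|/|X| \ge \varepsilon$, so that $\max\{|R|/|X|,\varepsilon\} = |R|/|X|$. Substituting this into~\cref{eq:ed_apx} turns the relative guarantee into the cleaner two-sided bound
\[
	\left| \frac{|S\cap R|}{|S|} - \frac{|R|}{|X|} \right| \le \delta\,\frac{|R|}{|X|}.
\]

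Next I would unfold the absolute value into the chain $-\delta\,|R|/|X| \le |S\cap R|/|S| - |R|/|X| \le \delta\,|R|/|X|$, add $|R|/|X|$ to every part to obtain $(1-\delta)\,|R|/|X| \le |S\cap R|/|S| \le (1+\delta)\,|R|/|X|$, and finally multiply through by the positive quantity $|X|$ to arrive at the claimed inequality $(1-\delta)|R| \le |X|\,|S\cap R|/|S| \le (1+\delta)|R|$. Since $R \in \mathcal F$ was arbitrary, the bound holds for every $R$, which is precisely the assertion.

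There is essentially no obstacle here beyond the bookkeeping: the only substantive step is recognizing that $|R|\ge\varepsilon|X|$ forces the maximum term to equal $|R|/|X|$, after which everything is a mechanical rearrangement of a chain of inequalities. The one point to state carefully is that multiplying by $|X|$ preserves the inequalities, which is immediate because $X$ is a finite nonempty base set and hence $|X|>0$.
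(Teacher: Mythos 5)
Your proof is correct and follows essentially the same route as the paper's: both arguments hinge on the hypothesis $|R| \ge \varepsilon |X|$ collapsing the maximum in \cref{eq:ed_apx} to $|R|/|X|$ (equivalently $\delta \max\{|R|, \varepsilon|X|\} = \delta|R|$), followed by algebraic rearrangement. The only difference is presentational --- the paper unfolds the absolute value via an explicit case distinction on the sign of $\tsize{S \cap R}/\tsize{S} - \tsize{R}/\tsize{X}$, whereas you handle both sides at once in a single two-sided chain, which is arguably tidier.
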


Let $\mathcal F|_A  = \{ R \cap A \mid R \in \mathcal F \}$ be the restriction of
$\mathcal F$ to a set $A \subseteq X$.  A set $A \subseteq X$ is \emph{shattered by}
$\mathcal F$ if every subset of $A$ can be obtained by an intersection of
$A$ with a set $R \in \mathcal F$, i.e., $\mathcal F|_A=
2^A$. The \emph{VC-dimension} of a set
system~$(X, \mathcal F)$ is the size of the largest subset $A \subseteq X$ such
that $A$ is shattered by $\mathcal F$~\cite{Vapnik71}.



\begin{theorem}[Har-Peled and Sharir~\cite{HarPeled2011}, Li et al.~\cite{LI20015}]
	\label{theorem:rel_eps_approx}
	Let $(X, \mathcal F)$ be a finite set system with VC-dimension $d$, and
	let $\delta, \varepsilon, \gamma \in (0,1]$. A uniform random
	sample $S \subseteq X$ of size 
		\[ \Theta \left( \frac{ d \cdot \log \varepsilon^{-1} + \log \gamma^{-1}} { \varepsilon \delta ^2}\right)\]
	is a relative $(\varepsilon,
	\delta)$-approximation for $(X, \mathcal F)$ with probability $(1-\gamma)$.
\end{theorem}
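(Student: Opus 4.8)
The plan is to follow the standard VC-dimension sampling argument, combining a relative Chernoff--Hoeffding bound for a single range with a symmetrization step that replaces the union bound over the (possibly huge) family $\mathcal F$ by a union bound over only polynomially many ranges. Write $s = |S|$, and for a fixed range $R \in \mathcal F$ put $p = |R|/|X|$ and $\hat p = |S\cap R|/|S|$. First I would establish the single-range estimate: since $S$ is a uniform random sample, $\hat p$ concentrates around $p$, and a relative (multiplicative) deviation bound yields
\[
\Pr\!\left[\, \bigl| \hat p - p \bigr| > \delta \max\{p, \varepsilon\} \,\right] \le 2\exp\!\left(-c\,\delta^2 \varepsilon\, s\right)
\]
for an absolute constant $c>0$. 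The $\max\{p,\varepsilon\}$ is exactly what lets one bound cover both the \emph{heavy} ranges ($p \ge \varepsilon$, where we need multiplicative error $\delta$ and the exponent is $\delta^2 p s \ge \delta^2\varepsilon s$) and the \emph{light} ranges ($p < \varepsilon$, where the additive slack $\delta\varepsilon$ suffices).

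The difficulty is that a naive union bound over all of $\mathcal F$ fails, since $\mathcal F$ may be enormous. To circumvent this I would use the \emph{double-sampling} (symmetrization) trick: introduce an independent ghost sample $S'$ of the same size and argue that the probability that some range deviates on $S$ is bounded, up to a constant factor, by the probability that some range deviates \emph{between} $S$ and $S'$. The key gain is that the latter event depends only on $R \cap (S \cup S')$, so by the Sauer--Shelah lemma the number of distinct ranges that must be considered is at most $O\!\left((2s)^d\right)$, where $d$ is the VC-dimension. This is where the VC-dimension enters, and making the symmetrization respect the \emph{relative} error regime---rather than the classical additive $\varepsilon$-approximation regime---is the main technical obstacle; it requires the relative deviation inequality of Li et al.~\cite{LI20015} in place of a plain Hoeffding bound.

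Finally I would combine the two ingredients by a union bound over the $O\!\left((2s)^d\right)$ surviving ranges:
\[
\Pr[\,S \text{ is not a relative } (\varepsilon,\delta)\text{-approximation}\,] \;\le\; O\!\left((2s)^d\right)\cdot \exp\!\left(-c'\,\delta^2\varepsilon\, s\right),
\]
and choose $s = \Theta\!\left(\frac{d\log\varepsilon^{-1} + \log\gamma^{-1}}{\varepsilon\delta^2}\right)$ so that the right-hand side is at most $\gamma$. The $\log\gamma^{-1}$ term absorbs the target failure probability, while the $d\log\varepsilon^{-1}$ term comes from $\log\!\left((2s)^d\right) = d\log(2s)$ after substituting the value of $s$ and bounding $\log s$ in terms of $\log\varepsilon^{-1}$, a short bootstrapping computation. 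Since the statement is quoted verbatim from Har-Peled and Sharir~\cite{HarPeled2011} and Li et al.~\cite{LI20015}, I would ultimately just cite it; the sketch above merely records the route I would take to reprove it.
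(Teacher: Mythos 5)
The paper offers no proof of this statement---it is quoted verbatim from Har-Peled and Sharir~\cite{HarPeled2011} and Li et al.~\cite{LI20015}---so your decision to ultimately cite it matches the paper exactly, and your sketch correctly records the standard route those sources take (relative Chernoff--Hoeffding bound for a single range, ghost-sample symmetrization adapted to the relative-error regime, Sauer--Shelah to reduce to $O\left((2s)^d\right)$ traces, union bound). The only loose end is your final bootstrapping remark: $\log s$ also contains a $\log \delta^{-1}$ term that is not in general bounded by $\log \varepsilon^{-1}$, which is precisely why the sharp constant-free bound as stated needs the relative-deviation ($d_\nu$-distance) machinery of Li et al.\ rather than a bare union-bound computation---but since you defer to the citation for the theorem itself, this is harmless.
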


For a vertex $u \in N(v)$, let $\coCrEdge{uv}{\Gamma} = \{ e \in E \mid
\Cr(\Gamma, e, uv) = 0 \}$ denote the set of edges that are not crossed by the
edge $uv$ in $\Gamma$.  Then we have $\coCr(\Gamma, v) = \sum_{u \in N(v)}
\size{\coCrEdge{uv}{\Gamma}}$.  Moreover, let $\coCrEdge{uv}{p} =
\coCrEdge{uv}{\moveVertex{v}{p}}$. 
Then the set $\co{\mathcal F_{uv}} = \bigcup_{p \in \R^2}
\left\{\coCrEdge{uv}{p}\right\}$ contains for each drawing $\moveVertex{v}{p}$
the set of edges that are not crossed by the edges $uv$, i.e,
$\coCrEdge{uv}{p}$.
%
In particular $(E, \co{\mathcal F_{uv}})$ is a set system and we will prove that
it has bounded VC-dimension. This allows us to approximate the number of edges
that are not crossed by the edge $uv$. We facilitate this to approximate the
co-crossing number of a vertex for $\varepsilon$-well behaved drawings.

\begin{lemma}
	\label{lemma:vc_dimension}
	The VC-dimension of the set system $(E, \co{\mathcal F_{uv}})$ is at most 8.
\end{lemma}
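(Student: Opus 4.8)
The plan is to reinterpret the set system geometrically and then bound the number of distinct ranges by a line-arrangement cell count. First I would observe that for a fixed neighbor $u \in N(v)$ sitting at a point $a$, moving $v$ to $p$ makes $uv$ the segment $ap$, and an edge $e$ satisfies $\Cr(\moveVertex{v}{p}, e, uv)=0$ exactly when $ap$ does not cross $e$, i.e.\ when $p \in \Vr(u,e)$. Hence
\[
\coCrEdge{uv}{p} = \{\, e \in E \mid p \in \Vr(u,e) \,\},
\]
so membership of a fixed edge $e$ in the ranges of $\co{\mathcal F_{uv}}$ is governed entirely by which side of the region $\Vr(u,e)$ the query point $p$ lies on. This turns the question into a dual counting problem over the point set $\R^2$.

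Next I would establish the key geometric fact: the complement of $\Vr(u,e)$, the \emph{shadow} of $e$ seen from $u$, is the intersection of at most three half-planes. Writing $e=bc$ and placing $u$ at $a$, a point $p$ lies in this shadow iff $p$ lies beyond the line through $b,c$ (on the side opposite $a$) \emph{and} inside the cone spanned by the rays $a\to b$ and $a\to c$. The three bounding lines are therefore the line supporting $e$ and the two lines through $a$ and the endpoints of $e$; equivalently, $\Bd(u,e)$ is contained in these three lines. Consequently, whether $p \in \Vr(u,e)$ is a Boolean function of the signs of $p$ with respect to these $\leq 3$ lines. (Degenerate configurations, such as $a$ being collinear with $e$, only collapse bounding lines and thus can only decrease the count.)

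With this in hand the VC-bound is pure counting. Fix any candidate shattered set $A = \{e_1,\dots,e_d\} \subseteq E$ and collect its $\leq 3d$ associated lines. Any two points lying in the same cell of this line arrangement fall on the same side of every line, hence they induce the same set $\coCrEdge{uv}{p} \cap A$. Therefore the number of distinct subsets of $A$ realizable as $\coCrEdge{uv}{p}\cap A$ is at most the number of cells of an arrangement of $3d$ lines, namely $1 + 3d + \binom{3d}{2}$. Shattering $A$ requires realizing all $2^d$ subsets, so it forces $2^d \leq 1 + 3d + \binom{3d}{2}$. This inequality holds for $d=8$ but fails for $d=9$, where $2^9 = 512 > 379 = 1 + 27 + \binom{27}{2}$. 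Hence no $9$-element subset of $E$ can be shattered, and the VC-dimension of $(E, \co{\mathcal F_{uv}})$ is at most $8$.

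I expect the main obstacle to be the geometric step, i.e.\ proving cleanly that the shadow region is an intersection of exactly three half-planes and that $\Bd(u,e)$ lies on the corresponding three lines, together with a careful treatment of the degenerate and boundary cases so that the three-lines-per-edge bound never increases. Once that structural claim is secured, the counting argument and the crossover at $d=9$ are routine and yield the bound $8$ directly.
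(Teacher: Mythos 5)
Your proof is correct and takes essentially the same approach as the paper: bound the number of traces on a candidate shattered set by the cell count of the line arrangement formed by the (at most) three lines associated with each edge, and observe that $2^d$ outgrows this quadratic bound between $d=8$ and $d=9$. If anything, your write-up is slightly cleaner than the paper's own: you correctly state that the \emph{complement} of $\Vr(u,e)$ is the intersection of three half-planes (the paper loosely attributes this to $\Vr(u,e)$ itself, though only the fact that $\Bd(u,e)$ lies on three lines is used), and you restrict to the $3d$ lines of the candidate set with the exact cell bound $1+3d+\binom{3d}{2}$, where the paper works with all $3|E|$ lines and a formula missing the linear term---neither difference affects the conclusion.
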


\begin{proof} 
	Recall that that vertex $u$ has a fixed position. Let  $\Bd(u, e)$ be the
	boundary of the visibility region of $u$ and the edge $e \in E$.  Let $\Arr$
	denote the arrangement of all boundaries $\Bd(u,e), e \in E$.  Let $F$ be the
	set of faces in $\Arr$.  Note that by Lemma~3.1
	in~\cite{doi:10.1137/1.9781611975055.12} for every two points $p,q \in f$ the
	sets $E_p$ and $E_q$ of edges that have a non-empty intersection with the edge
	$uv$ when $v$ is moved to $p$ and $q$, respectively, coincide. Hence, the set
	$E_f \subseteq E$ of edges that cross the edge $uv$, in the drawing obtained
	from $\Gamma$ where $v$ is moved to an arbitrary position in $f$, is well
	defined.  Thus, the number of faces $|F|$ is an upper bound for
	$\size{\co{\mathcal F_{uv}|_A}}$ for every $A \subset E$. Note that there may
	be subsets of $E$ that are represented by more than one face. Moreover,
	observe that the visibility region $\Vr(u, e)$ is the intersection of three
	half-planes. Let $l^1_e, l^2_e,l^3_e$ be the supporting lines of these
	half-planes and let $\Arr'$ be the arrangement of lines $l^i_e, e\in E$.
	Hence, the number of faces in the arrangement $\Arr'$ of $3m$ lines is an
	upper bound for $|F|$, with $m=|E|$.  The number of faces $|F'|$ of $\Arr'$ is
	bounded by $f(m) := 3m(3m- 1) / 2 + 1$~\cite{10.2307/2686448}. Thus, it is not
	possible to shatter a set $A \subset E$ if the number of faces $|F'|$ is
	smaller than the number of subsets of $A$. The largest number for which the
	equality $2^m \leq f(m)$ holds is between $8$ and $9$. Since $2^m$ grows
	faster than $f(m)$, the largest set that can possibly be shattered has size at
	most $8$. 
\end{proof}


Due to \cref{obs:relative_apx} and \cref{theorem:rel_eps_approx} a relative
$(\varepsilon, \delta)$-approximation $S_u$ of $(E, \co{\mathcal F_{uv}})$
allows us to approximate the number of edges that are not crossed by the edge
$uv$.
In the following we show that we can approximate the co-crossing number of a
vertex $v$ in any drawing $\moveVertex{v}{p}$ if we are given a relative
$(\varepsilon, \delta)$-approximation $S_u$ for each vertex $u$ that is adjacent
to $v$. 
The number $|\coCrEdge{uv}{p} \cap S_u| / |S_u| $ corresponds to the relative
number of edges in $S_u$ that  are not crossed by the edge  $uv$. Hence, the
function $\lambda(p) = |E| \sum_{u \in U} {\tsize{\coCrEdge{uv}{p}\cap
S_u}}/{\tsize{S_u}}$ can be seen as an estimation of $\coCr(p) =
\coCr(\moveVertex{v}{p}, v)$.
%

\begin{restatable}[$\star$]{lemma}{lemmaapxsum}
	%
	\label{lemma:apx:sum}
	Let $\varepsilon, \delta \in (0,1]$ be two parameters and let $\Gamma$ be an
	$\varepsilon$-well behaved drawing of~$G$.  For every $u \in N(v)$, let $S_u$
	be a relative $(\varepsilon, \delta)$-approximation of the set system $(E,
	\co{\mathcal F_{uv}})$.  Then $(1-\delta) \coCr(p) \leq \lambda(p)  \leq
	(1+\delta) \coCr(p)$ holds for all $p \in \R^2$.
	%
\end{restatable}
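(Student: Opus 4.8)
The plan is to reduce the statement to a single per-neighbor multiplicative estimate and then sum over the neighbors of $v$. First I would observe that the $\varepsilon$-well-behaved assumption is exactly what supplies the size hypothesis required by \cref{obs:relative_apx}. By definition, for every $u \in N(v)$ and every $p \in \R^2$ the edge $uv$ crosses at most $(1-\varepsilon)|E|$ edges in $\moveVertex{v}{p}$, so the number of edges it does \emph{not} cross satisfies $\size{\coCrEdge{uv}{p}} \geq \varepsilon\,|E|$. Since each $\coCrEdge{uv}{p}$ is precisely a member of $\co{\mathcal F_{uv}}$ and $E$ is the base set of the system, this shows that \emph{every} set $R \in \co{\mathcal F_{uv}}$ has size at least $\varepsilon|E|$, which is the precondition of \cref{obs:relative_apx}.

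Next, for a fixed $u \in N(v)$ I would invoke \cref{obs:relative_apx}. Because $S_u$ is a relative $(\varepsilon,\delta)$-approximation of $(E, \co{\mathcal F_{uv}})$ and the size hypothesis just verified holds, the proposition yields, for every $p \in \R^2$,
\[
(1-\delta)\,\size{\coCrEdge{uv}{p}} \;\leq\; |E|\,\frac{\size{S_u \cap \coCrEdge{uv}{p}}}{\size{S_u}} \;\leq\; (1+\delta)\,\size{\coCrEdge{uv}{p}}.
\]
The middle term is exactly the summand of $\lambda(p)$ indexed by $u$, and the outer terms are the corresponding summand of $\coCr(p) = \sum_{u \in N(v)} \size{\coCrEdge{uv}{p}}$, scaled by $1\pm\delta$.

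Finally I would sum these inequalities over all $u \in N(v)$. Since the factors $1-\delta$ and $1+\delta$ do not depend on the neighbor, linearity of summation gives $(1-\delta)\,\coCr(p) \leq \lambda(p) \leq (1+\delta)\,\coCr(p)$, which is the claim. The bound holds uniformly for all $p \in \R^2$ because the per-neighbor estimate from \cref{obs:relative_apx} was established for every $p$ simultaneously (the approximation guarantee of $S_u$ is over \emph{all} sets of the system, hence over all positions $p$).

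The argument is essentially an assembly of the two cited results, so I expect no deep obstacle. The one step deserving care—and the one I would double-check—is the translation of the $\varepsilon$-well-behaved hypothesis into the uniform lower bound $\size{R} \geq \varepsilon|E|$ for every $R \in \co{\mathcal F_{uv}}$: this is exactly the precondition that makes \cref{obs:relative_apx} applicable and that converts the relative $(\varepsilon,\delta)$-approximation into the clean two-sided multiplicative guarantee, so it must be verified for both neighbors $u$ and all positions $p$ before the summation step is justified.
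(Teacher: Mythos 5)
Your proposal is correct and follows essentially the same route as the paper's proof: translate the $\varepsilon$-well-behaved hypothesis into the lower bound $\tsize{\coCrEdge{uv}{p}} \geq \varepsilon|E|$ needed by \cref{obs:relative_apx}, apply the proposition per neighbor to get the two-sided multiplicative estimate on each summand, and sum over $u \in N(v)$. Your explicit remark that every member of $\co{\mathcal F_{uv}}$ arises as some $\coCrEdge{uv}{p}$, so the size precondition holds for the whole set system, is a detail the paper leaves implicit but is exactly right.
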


Assume that $\varepsilon, \delta, \gamma \in (0,1) $ are constants.
\cref{lemma:apx:sum} shows that $k$ independent samples $S_u$ of constant size
approximate the co-crossing number of $v$. By slightly increasing the number of
samples, we can use a single set $S$ for all neighbors $u$.  This reduces the
running time from $O(k^3\log k)$ to $O(k^2\log^3 k)$.

\begin{restatable}[$\star$]{lemma}{lemmarandomsample}
	\label{lemma:random_sample}
	Let $v$ be a degree-$k$ vertex and let $\varepsilon, \delta, \gamma \in
	(0,1]$ with $\gamma \leq 1 / k$.
	A  uniformly random sample $S \subseteq E$ of size 
	$\Theta \left( (\log \varepsilon^{-1} + \log \gamma^{-1}) / (\varepsilon
	\delta ^2)\right)$ 
	is a relative $(\varepsilon, \delta)$-approximation the set system $(E,
	\co{\mathcal F_{uv}})$ with probability $1 - k\gamma$, for each $uv \in E$.
	%
\end{restatable}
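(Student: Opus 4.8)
The plan is to apply the relative approximation guarantee of \cref{theorem:rel_eps_approx} to each neighbor of $v$ separately and then combine the resulting events by a union bound. First I would fix a single neighbor $u \in N(v)$ and invoke \cref{lemma:vc_dimension}, which asserts that the set system $(E, \co{\mathcal F_{uv}})$ has VC-dimension at most $8$. Since this bound is an absolute constant, the dimension parameter $d$ appearing in the sample-size expression of \cref{theorem:rel_eps_approx} can be absorbed into the $\Theta(\cdot)$. Hence a uniform random sample $S \subseteq E$ of the stated size $\Theta\left((\log \varepsilon^{-1} + \log \gamma^{-1})/(\varepsilon \delta^2)\right)$ is a relative $(\varepsilon, \delta)$-approximation of $(E, \co{\mathcal F_{uv}})$ with probability at least $1 - \gamma$.

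The key observation is that one and the same sample $S$ can serve all neighbors simultaneously. Since $v$ has degree $k$, there are exactly $k$ edges $uv$ incident to $v$, and thus $k$ set systems $(E, \co{\mathcal F_{uv}})$, one for each $u \in N(v)$. For each of them the probability that $S$ fails to be a relative $(\varepsilon, \delta)$-approximation is at most $\gamma$. Applying a union bound over these $k$ failure events bounds the probability that $S$ fails for at least one neighbor by $k\gamma$. Consequently, with probability at least $1 - k\gamma$ the sample $S$ is simultaneously a relative $(\varepsilon, \delta)$-approximation for every $(E, \co{\mathcal F_{uv}})$, $u \in N(v)$, which is precisely the claim.

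The hypothesis $\gamma \leq 1/k$ enters only to keep the bound meaningful: it ensures $k\gamma \leq 1$, so that $1 - k\gamma$ is a non-negative, non-trivial success probability. I do not expect any genuine obstacle; the argument is a direct composition of the constant VC-dimension bound, the relative-approximation theorem, and a union bound. The one point requiring a little care is that the same realization of $S$ is reused across all neighbors, so the events ``$S$ is a good approximation for $u$'' are not independent. Since the union bound makes no independence assumption, this causes no difficulty, and it is exactly what allows us to replace the $k$ separate samples $S_u$ of \cref{lemma:apx:sum} by a single, only slightly larger sample, the increase being the $\log \gamma^{-1} \geq \log k$ term forced by $\gamma \leq 1/k$.
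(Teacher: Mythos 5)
Your proposal is correct and matches the paper's proof in essence: the paper likewise fixes each neighbor $u \in N(v)$, combines \cref{lemma:vc_dimension} with \cref{theorem:rel_eps_approx} to get success probability $1-\gamma$ per set system $(E, \co{\mathcal F_{uv}})$, and then bounds $\Prob\bigl(\bigwedge_{u \in N(v)} A_u\bigr) \geq \sum_{u \in N(v)} \Prob(A_u) - k + 1 = 1 - k\gamma$, which is exactly the complementary form of your union bound over the $k$ failure events. Your explicit remarks that the events need not be independent and that $\gamma \leq 1/k$ merely keeps the bound nontrivial are both accurate and consistent with the paper.
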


With \cref{lemma:apx:sum} and \cref{lemma:random_sample} at hand, we
have all the necessary tools to prove the main theorem.

\begin{theorem}
	\label{theorem:apx:general}
	Let $\varepsilon, \delta,
	\gamma \in (0,1]$ be three constants and let $G=(V,E)$ be a graph with a
	$\varepsilon$-well behaved drawing $\Gamma$ and let $v \in V$ be a degree-$k$
	vertex. Let $p^\star$ be the position that maximizes $\coCr(\moveVertex{v}{p^\star},
	v)$.  A $(1-\delta)$-approximation of $\coCr(\moveVertex{v}{p^\star})$ can be
	computed in $O\left ( k^2 \log^3 k \right)$ time with probability $1 - \gamma$.
\end{theorem}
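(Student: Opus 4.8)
The plan is to turn the two sampling lemmas into an algorithm by maximizing the estimator $\lambda$ directly over the arrangement induced by a single sampled edge set, so that the running time is dominated by constructing that arrangement.

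\emph{Sampling.} First I would draw one uniform random set $S\subseteq E$ and invoke \cref{lemma:random_sample} with parameter $\gamma/k$ in place of its failure parameter. As $\varepsilon,\delta,\gamma$ are constants and $\gamma/k\le 1/k$, this produces a set of size $\Theta\big((\log\varepsilon^{-1}+\log(k\gamma^{-1}))/(\varepsilon\delta^2)\big)=\Theta(\log k)$ that, with probability $1-k\cdot(\gamma/k)=1-\gamma$, is simultaneously a relative $(\varepsilon,\delta)$-approximation of all $k$ set systems $(E,\co{\mathcal F_{uv}})$, $u\in N(v)$. I condition on this event for the rest of the argument. Applying \cref{lemma:apx:sum} with $S_u=S$ for every neighbor then gives $(1-\delta)\coCr(p)\le\lambda(p)\le(1+\delta)\coCr(p)$ for all $p\in\R^2$, where $\lambda(p)=|E|\sum_{u\in N(v)}|\coCrEdge{uv}{p}\cap S|/|S|$.

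\emph{Reduction to an arrangement search.} The key observation is that $e\in\coCrEdge{uv}{p}$ exactly when $p\in\Vr(u,e)$, so $\lambda(p)$ changes only as $p$ crosses some boundary $\Bd(u,e)$ with $u\in N(v)$ and $e\in S$. Hence $\lambda$ is constant on each face of the arrangement $\Arr$ of these $k\cdot|S|=\Theta(k\log k)$ boundaries, and the maximizer $p^\star$ of $\coCr$ (which exists since $\coCr$ is piecewise constant on the full arrangement) lies in some face of $\Arr$. I would build $\Arr$ together with its dual and compute $\lambda$ on all faces by a breadth-first traversal: crossing a single boundary $\Bd(u,e)$ changes $\lambda$ by exactly $\pm|E|/|S|$, so all face values follow from one base evaluation in time linear in the size of $\Arr$. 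Let $q^\star$ be a point in a face maximizing $\lambda$, and output $\lambda(q^\star)$.

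\emph{Correctness.} Since $\lambda$ is constant on faces and $p^\star$ lies in a face, $\lambda(q^\star)\ge\lambda(p^\star)\ge(1-\delta)\coCr(p^\star)$, while $\lambda(q^\star)\le(1+\delta)\coCr(q^\star)\le(1+\delta)\coCr(p^\star)$ because $p^\star$ maximizes $\coCr$. Thus $\lambda(q^\star)$ approximates $\coCr(\moveVertex{v}{p^\star})$ within a factor $(1\pm\delta)$; running the construction with $\delta$ replaced by a constant fraction of itself yields the stated one-sided $(1-\delta)$ guarantee without changing the asymptotics.

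\emph{Running time and main obstacle.} Each $\Bd(u,e)$ has constant complexity ($\Vr(u,e)$ is an intersection of three half-planes), so $\Arr$ is an arrangement of $\Theta(k\log k)$ constant-complexity curves with $O(k^2\log^2 k)$ vertices; a sweep builds it and its dual in $O\big((k^2\log^2 k)\log(k\log k)\big)=O(k^2\log^3 k)$ time, dominating the $O(\log k)$ sampling and the linear traversal. I expect the genuine work to be algorithmic rather than probabilistic: the two lemmas compose directly, whereas one must argue carefully that maximizing the piecewise-constant estimate $\lambda$ over the faces of the sampled arrangement is correct — the optimum truly lies in a face — and that sharing a single sample $S$ across all neighbors keeps $\Arr$ at $\Theta(k\log k)$ curves, which is exactly what brings the running time down to $O(k^2\log^3 k)$.
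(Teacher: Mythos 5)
Your proposal matches the paper's proof essentially step for step: the same single sample of size $\Theta(\log k)$ obtained from \cref{lemma:random_sample} with failure parameter $\gamma/k$, the same application of \cref{lemma:apx:sum} with a rescaled $\delta$, and the same chain of inequalities relating the maximizer $q^\star$ of $\lambda$ to the maximizer $p^\star$ of $\coCr$. The only cosmetic difference is that where you re-derive the arrangement sweep over the $\Theta(k\log k)$ boundaries $\Bd(u,e)$ by hand, the paper simply invokes \cref{theorem:opt:placement} on the sampled subgraph (with $m=|S|\in\Theta(\log k)$ and $n\le 2m$) to compute $q^\star$, which is the same algorithm packaged as a black box and yields the identical $O(k^2\log^3 k)$ bound.
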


\begin{proof} 
	Let $\gamma'= \gamma \cdot k^{-1}$ and $\delta' = \delta / 2$.
	Let $S\subseteq E$ be a uniformly random sample of size $\Theta \left( (\log
	\varepsilon^{-1} + \log \gamma'^{-1}) / (\varepsilon \delta'^2)\right)$.
	According to \cref{lemma:random_sample}, for each $uv \in E$, the sample
	$S$ is a $(\varepsilon, \delta')$-approximation of the $(E, \co{\mathcal
	F_{uv}})$ with probability $1 - k \gamma' = 1 - \gamma$. 

	According to \cref{lemma:apx:sum} the expected number of crossing-free
	edges $\lambda(p)$ is a $(1-\delta)$-approximation of $\coCr(p)$, i.e.,
	$(1+\delta') \coCr(q) \geq \lambda(q) \geq (1-\delta') \coCr(q)$.  Let $p^\star$ be the
	position that maximizes $\coCr(p)$ and let $q^\star$ be the position that
	maximizes $\lambda(q)$. Hence, we have $\lambda(q^\star) \geq
	\lambda(p^\star)$. Observe that over $\delta'>0$ the inequality $(1-\delta')/
	(1+\delta') \geq 1 - 2\delta'$ holds.  We use this to prove that $\coCr(q^\star)
	\geq (1-2\delta') \coCr(p^\star)$.
	
	\begin{align*}
		\coCr(q^\star)
		\geq \frac{1}{(1+\delta')} \lambda(q^\star) 
		\geq \frac{1}{(1+\delta')} \lambda(p^\star)
		\geq \frac{1-\delta'}{1+\delta'}  \coCr(p^\star)  
		\geq (1-2\delta')\coCr(p^\star)
	\end{align*}
	Plugging in the value $\delta / 2$ for $\delta'$ yields that
	$\coCr(q^\star)$ is a $\delta$-approximation of $\coCr(p^\star)$.
	%
	Since the three parameters $\varepsilon, \delta, \gamma$ are
	constants, the size of the sample $S$ is in $\Theta( \log k)$.  
	Recall that the running time to compute the crossing-minimal position of $v$
	in a drawing $\Gamma$ is \RCMRunningTime
	(\cref{theorem:opt:placement}). Thus the position $q^\star$ can be
	computed in $O(k \log k + \log k)^2  \log (k \log k + \log k))$ time,
	since $m = |S| \in  \Theta(\log k)$ and $n \leq 2m$. The following estimation
	concludes the proof.
	\[
		O \left( k^2 \log^2  k \log (k \log k)\right) 
		= O \left( k^2 \log^2  k \log (k^2)\right) 
		= O(k^2 \log^3 k)
	\]
	%
\end{proof}



Note that the previous techniques can be used to design a $\delta$-approximation
algorithm for the crossing number of a vertex. But this requires drawings of
graphs where at least $\varepsilon |E|$ edges, i.e., $\Omega(|E|)$, are crossed.
This restriction is not too surprising, since sampling the set of edges can
result in an arbitrarily bad approximation for a vertex whose crossing-minimal
position induces no crossings.

\subsection{Experimental Evaluation}
\label{sec:eval_restricted}

In this section we complement the theoretical analyses of the random sampling
of edges with an experimental evaluation. We first introduce our benchmark
instances, followed by a  description of a preprocessing step to reduce trivial
cases and a set of configurations that we evaluate.

\subparagraph{Benchmark Instances.}
We evaluate our algorithm on graphs from three different
sources.
\begin{description}
	\item[DIMACS] The graphs from this classes are selected from the 10th \dimacs
		Implementation Challenge - Graph Partitioning and Graph
		Clustering~\cite{DBLP:reference/snam/BaderKM00W18}.
	\item[Sparse MC] Inspired by the selection of benchmark graphs
		in~\cite{7889042}, we selected a few arbitrary graphs from the Suite Sparse
		Matrix Collection (formerly known as the Florida Sparse Matrix
		Collection)~\cite{Davis:2011:UFS:2049662.2049663}. 
	\item[$k$-regular] For each $k=3,6,9$ we computed $25$ random $k$-regular
		graphs on $1000$ vertices following the model of Steger and
		Wormald~\cite{steger_wormald_1999}.  \end{description}

\subparagraph{Preprocessing.} Some of the benchmark graphs contain multiple
connected components.  Moreover, we observed that the \stress layout introduces
crossings with edges that are incident to a degree-1 vertex. In both cases,
these  crossings can be removed. Therefore, we reduce the benchmark instances so
that they do not contain these trivial cases as follows.  First, we evaluate
only the connected component $G_C$ of each graph $G$ that has the highest number
of vertices. Further, we iteratively remove all vertices of degree $1$ from
$G_C$.  

The vertex-movement approach takes an initial drawing of a graph as input. Note
that the experimental results in~\cite{doi:10.1137/1.9781611975055.12} showed
that drawings obtained with \stress have the smallest number of crossings
compared to other energy-based methods implemented in \ogdf. In order to avoid
side effects, we first computed a random drawing for each graph $G_C$ where each
coordinate is chosen uniformly at random on a grid of size $m\times m$.
Afterwards we applied the \stress method implemented in \ogdf~\cite{ogdf}
(snapshot 2017-07-23) to this drawing. 

\subparagraph{Configurations.} The previously described approach moves the
vertices in a certain order. We use the order proposed
in~\cite{doi:10.1137/1.9781611975055.12}, i.e, in descending order with respect
to the function $\Cr(\Gamma_0, v_i)^2, v_i \in V$, where $\Gamma_0$ is the
initial drawing. The computation of the new position $p^\star_i$ of a vertex
$v_i$ depends on three parameters $(|S_i|, |P_i|, K)$.
The parameter $K$ is a threshold on the degree $k_i$ of $v_i$, since  we
observed in our preliminary experiments, that in case that $k_i$ is large,
$128\,GB$ of memory are not sufficient to compute the crossing-minimal region.
Note that in case that $|S_i|$ is constant the running time to compute $R_i$ is
$O((k_i \cdot n')^2 \log n') = O(k_i^2)$, where $n' = |V(S)| \in O(|S|)$.  We
handle vertices of degree larger than $K$, as follows.  
Let $N_1\cup \dots \cup N_l$ be a partition of the neighborhood $N(v)$ of $v$
with $l = |N(v)| / K$. Further, let $u_1, u_2, \dots, u_k$ be a random order of
$N(v)$, then $N_j$ contains the vertices $u_a$ with $j \leq a \leq j + K$. For
each $j$, we compute a random sample $S_i^j$ and a crossing-minimal position
$q_j^\star$ of vertex $v$ with neighborhood $N_j$ with respect to $S^j_i$.  The
new position $p_i^\star$ of $v_i$ is the position that minimizes
$\Cr(\moveVertex{v_i}{q^\star_j}, v_i)$.

We select the same parameters for each vertex and thus denote the triple by
$(|S|, |P|, K)$.  We expect that with an increasing number $|S|$ the number of
crossings decreases.  The sample size $|S|=512$, was the largest number of
samples such that we are able to compute a final drawing of our benchmark
instances in reasonable time. As a baseline we sample $1000$ points in the
plane. Thus, we evaluate the following two configuration, $\SF=(512, 1, 100)$
and $\SN=(0, 1000, \infty)$.  Finally, we restrict the movement of a single
vertex to be within an axis-aligned square that is twice the size of the
smallest axis-aligned squares that entirely contains $\Gamma_0$.

\begin{table}
	\caption{Statistics for the \dimacs and \sparsemc graphs. $n$, $m$, and $\overline{\Delta}$
correspond the number of vertices, edges and the mean vertex-degree, respectively.}
	\label{tab:stats:dimacs:512}
	\centering
\begin{tabular}{lrrr|rrr|rr}

	&  \textbf{n}  &  \textbf{m} & $\overline{\Delta}$  &
	\multicolumn{3}{c|}{\textbf{crossings}}&  \multicolumn{2}{c}{\textbf{time
	[min]}}\\
	& & & & \stress & \SF & \SN &
	\SF & \SN \\

	\hline
	\dimacs\\
\hline
            adjnoun &   102 &    415 &     8.14 &                6\,576&  	3\,775 &       4\,468 &      0.11 &          0.09 \\
           football &   115 &    613 &    10.66 &                6\,865 &   3\,568 &       4\,030 &      0.14 &          0.17 \\
         netscience &   352 &    887 &     5.04 &                1\,724 &     583 &         814 &      0.53 &          0.31 \\
 c.metabolic &   445 &  2\,017 &     9.07 &              113\,117 &  55\,714 &      63\,028 &     11.29 &          2.29 \\
     c.neural &   282 &  2\,133 &    15.13 &              128\,068 &  86\,641 &      90\,920 &      5.23 &          2.07 \\
               jazz &   193 &  2\,737 &    28.36 &              223\,990 & 143\,647 &     153\,040 &      5.22 &          3.31 \\
              power & 3\,353 &  5\,006 &     2.99 &                7\,622 &   6\,854 &       6\,293 &      4.56 &         10.74 \\
              email &   978 &  5\,296 &    10.83 &              504\,144 & 342\,020 &     357\,272 &     37.12 &         12.48 \\
             hep-th & 4\,786 & 12\,766 &     5.33 &              836\,809 & 546\,780 &     638\,069 &     72.86 &         78.24 \\
\hline
\sparsemc \\
\hline
			1138\_bus &   671 &    991 &     2.95 &                657 &       402 &         467 &      0.41 &          0.33 \\
       ch7-6-b1 &   630 &  1\,243 &     3.95 &             64\,055 &    24\,928 &      26\,055 &      6.54 &          0.79 \\
         mk9-b2 & 1\,260 &  3\,774 &     5.99 &            412\,397 &   248\,884 &     252\,198 &     20.33 &          7.14 \\
       bcsstk08 & 1\,055 &  5\,927 &    11.24 &            455\,069 &   342\,996 &     344\,644 &     67.30 &         18.70 \\
       mahindas & 1\,258 &  7\,513 &    11.94 &          1\,463\,437 &   933\,247 &   1\,042\,787 &     68.17 &         24.09 \\
       eris1176 &   892 &  8\,405 &    18.85 &          1\,682\,458 & 1\,030\,881 &   1\,087\,605 &     77.09 &         27.33 \\
 commanche\_d & 7\,920 & 11\,880 &     3.00 &              6\,332 &     6\,239 &       6\,146 &      6.52 &         56.75 \\
\end{tabular}
\end{table}

\subparagraph*{Evaluation.}
\cref{tab:stats:dimacs:512} lists statistics for the \dimacs and the \sparsemc
graphs. In particular the number of crossings of the initial drawing (\stress)
and the drawing obtained by the \SF and \SN configurations.  Furthermore, we
report the running times for the two configurations.  Since we use an external
library (\ogdf) to compute the initial drawing, the reported times do not
include the time to compute the initial drawing. Note that \stress required at
most $0.9$\,min to complete on the \dimacs graph and $2.3$\,min on the \sparsemc
graphs. Since the size of the arrangement $\Arr(\Gamma, v)$ depends on the
degree of $v$, the overall running time varies with the number of vertices and
the average degree. Compare, e.g., c.metabolic to c.neural, or mk9-b2 to
bcsstk08. Moreover, the commanche\_d graph shows that the running time of $\SN$
is not necessarily smaller than the running time of $S_{512}$. For each point $p
\in P$ the number of crossings of edges incident to $v$ in $\moveVertex{v}{p}$
have to be counted. Since the commanche\_d graph contains over $11\,000$ edges,
the $\SF$ configuration with $|P|=1$ is faster than the $\SN$ configuration,
which has to count the number of crossings for $1\,000$ points.

Now consider the number of crossings in the initial drawing (\stress) and in the
drawing obtained by the $\SF$ configuration.  Since we move a vertex only if it
decreases its number of crossings, it is expected that the number of crossings
decreases on all graphs. For most graphs, the $\SF$ configuration decreases the
number of crossings by over $30\%$. In case of the ch7-6-b1 and the netscience
graph the number of crossings are even decreased by over $60\%$.  Exceptions are
the bcsstk08, power and commanche\_dgraphs with $24\%$, $10\%$ and $1.4\%$
respectively. 
Comparing the number crossings obtained by \SF to the configuration \SN, \SN
results in fewer crossings only on two graphs (power, commanche\_d).

\begin{figure}
	\centering
	\begin{subfigure}[b]{.3\textwidth}
		\includegraphics[page=1]{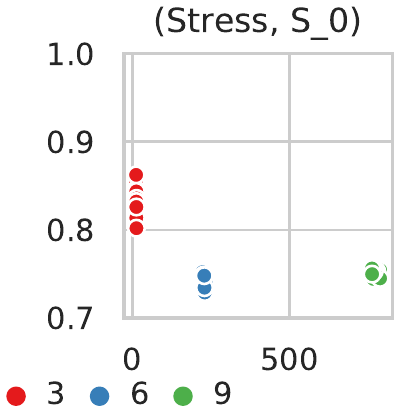}
		\caption{}
		\label{fig:k_regular:0}
	\end{subfigure}
	\quad
	\begin{subfigure}[b]{.3\textwidth}
		\includegraphics[page=2]{plots/k_regular/comp.pdf}
		\caption{}
		\label{fig:k_regular:512}
	\end{subfigure}
	\quad
	\begin{subfigure}[b]{.3\textwidth}
		\includegraphics[page=3]{plots/k_regular/comp.pdf}
		\caption{}
		\label{fig:k_regular:512_vs_0}
	\end{subfigure}

	\caption{Number of crossings of the $k$-regular graphs.}
	\label{fig:k_regular}	
\end{figure}

Observe that the power, 11138\_bus, ch7-6-b1 and commanche\_d graphs all have an
average vertex-degree of roughly $3.0$. The comparison of the number of crossing
obtained by $\SF$ and $\SN$ is not conclusive, since \SN yields fewer
crossings on the power and commanche\_d graphs and \SF on the remaining two. In
order to  be able to further study the effect of the (average) vertex degree
we evaluate the number of crossings of $k$-regular graphs.
We use the plots in \cref{fig:k_regular} for the evaluation.  Each point $(x_G,
y_G)$ corresponds to a $k$-regular graph $G$. The color encodes the
vertex-degree. Let $\Gamma_A$ and $\Gamma_B$ be two drawings of $G$ obtained by
an algorithm $A$ and $B$, respectively. The $x$-value $x_G$ corresponds to the
number of crossings in $\Gamma_A$ in thousands, i.e., $\Cr(\Gamma_A) / 1000$.
The $y$-value $y_G$ is the quotient $\Cr(\Gamma_B)/ \Cr(\Gamma_A)$. The titles
of the plots  are in the form $(A,B)$ and encode the compared algorithms. For
example in \cref{fig:k_regular:0} algorithm $A$ is \stress and $B$ is \SN.
For example, the \stress drawings of the $3$-regular graphs have on average
$12\,487$ crossings. Drawings obtained by \SN have on average $17\%$ less
crossings, i.e., $10\,402$.
On the other hand, \SF decreases the
number of crossings on average by $20\%$. For $k=6,9$, \SN and \SF both reduce
the number of crossings by $25\%$. In particular,
\cref{fig:k_regular:512_vs_0} shows that for $k=6,9$ it is unclear, whether
\SF or \SN computes drawings with fewer crossings.

\subsection{Weighted Sampling}
\label{sec:eval_weighted}

For some graphs, the previous section gives first indications that sampling a
set of edges yields a small number of crossings compared to a pure sampling of
points in the plane. In particular \cref{fig:k_regular:512_vs_0} indicates
that the edge-sampling approach does not always have a clear advantage over
sampling points in the plane. One reason for this might be that sampling within
the set of points $P_i$ in the region $R_i$ is too restrictive.  Observe that
the region $R_i$ is only crossing-minimal with respect to the sample $S$ and
does not necessarily contain the crossing-minimal position $p^\star_i$ of the
vertex $v_i$ with respect to all edges $E$.  On the other hand, sampling the set
of points $P_i$ in $\R^2$ does not use the structure of the graph at all.  This
motivates the following \emph{weighted} approach of sampling points in $\R^2$.

For a set $S \subset E$, let $\Cr_j$ be the number of crossings of the vertex
$v_i$ with respect to $\Gamma|_S$, when $v_i$ is moved to a cell $c_j$ of the
arrangement $\Arr(\Gamma|_S, v_i)$. Let $M$ be the maximum of all $\Cr_j$. We
select a cell $c_j$ with the probability $2^{M-\Cr_j} / \sum_k 2^{M-\Cr_k}$.
Within a given cell, we draw a point uniformly at random.  Note that in
case that there are exactly $n$ cells such that cell $c_j$ induces $j$
crossings, the probability that the cell $c_0$ is drawn converges to $1/2$ for
$n \rightarrow \infty$.

\subparagraph*{Benchmark Instances, Preprocessing \& Methodology.} We use the
same benchmark set and the same preprocessing steps as described in
\cref{sec:random_sampling}.  In order to obtain more reliable results, we
perform 10 independent iterations for each configuration on the \dimacs and
\sparsemc graphs. Since the $k$-regular graphs are uniform randomly computed,
they are already representative for their class. Therefore, we perform only
single runs on these graphs.

\subparagraph{Configuration.} We compare the following three configurations.
		$\RN$ refers to the uniform random sampling of points in $\R^2$ with the
	parameters $(|S|, |P|, K) = (0, 1000,\infty)$,
$\RF$ to the restricted sampling in $R_i$ with the parameters, $(512,
	1000, 100)$, and
	$\WF$ to the weighted sampling in $\R^2$ with the parameters $(512,
	1000, 100)$.
The configurations are selected such that $\RN$ and $\RF$ differ only in a
single parameter, i.e., in the number of sampled edges.  The only difference
between $\RF$ and $\WF$ is the sampling strategy. Note that the
parameters of $\RN$ and $\SN$ coincide, but not the parameters of \SF and
$\RF$.

\begin{table}
\centering
\caption{Mean and standard deviation (std) of the number of crossing categorized
by configuration. For each graph the configuration with the lowest and highest
number of crossings in marked.}
\label{tab:stats:weighted}
\begin{tabular}{lrr|rr|rr}
& \multicolumn{2}{c|}{$\RN$} & \multicolumn{2}{c|}{$\RF$} &
\multicolumn{2}{c}{$\WF$} \\
& mean & std & mean & std & mean & std \\
\hline
\dimacs\\
\hline
				adjnoun &        4\,445.0 &         39.55 &        \good 3\,655.7 &         62.96 &        3\,951.2 &         19.53 \\
           football &        \bad 3\,973.6 &         97.93 &        \good3\,350.0 & 83.38 &        \good3\,247.0 &         73.84 \\
         netscience &          \bad819.0 &         30.73 &          497.1 & 28.78 &          \good437.8 &         12.87 \\
 c.metabolic &       \bad62\,170.4 &        760.47 &       \good56\,032.3 & 1\,227.23 &       \bad62\,987.9 &      1\,907.64 \\
     c.neural &       89\,744.3 &      1\,239.22 &       \good86\,500.8 & 1\,364.5 &       \bad 99\,426.1 &      1\,258.98 \\
               jazz &      152\,013.8 &      1\,930.13 &      \good147\,387.1 & 3\,134.15 &      \bad 213\,019.4 &      1\,696.07 \\
              power &        6\,301.1 &         33.51 &        4\,512.8 & 63.09 &        \good 3\,912.5 &         30.97 \\
              email &      \bad 356\,583.4 &       3\,512.0 &      \good 341\,503.8 &      3\,480.74 &      351\,168.7 &      2\,624.18 \\
             hep-th &      \bad 640\,515.2 &      3\,443.22 &      515\,109.1 & 3\,983.23 &   \good392\,189.7 &      1\,551.53 \\
\hline
\sparsemc \\
\hline
			1138\_bus &          \bad474.6 &         13.25 &          342.9 & 12.91 &          \good247.6 &           9.8 \\
       ch7-6-b1 &       \good 25\,874.7 &        356.58 &       \good 25\,172.4 & 582.48 &       \bad28\,443.5 &         960.3 \\
         mk9-b2 &      \bad 251\,360.9 &      1\,514.05 &      245\,447.4 & 2\,914.18 &      \good 228\,794.5 &      2\,069.96 \\
				bcsstk08 &      \bad 346\,404.4 &       3\,730.3 &      \good 328\,182.0 & 6\,127.69 &  \good 330\,213.8 &      1\,726.01 \\
			mahindas &    1\,036\,745.7 &     11\,494.88 &      \good 936\,889.0 & 11\,207.34 &    \bad 1\,105\,850.9 &     10\,185.51 \\
			eris1176 &    1\,103\,184.6 &     21\,475.11 &    \good1\,037\,509.5 & 29\,877.3 &    \bad 1\,492\,423.4 &     25\,457.93 \\
 commanche\_d &        \bad6\,135.2 &         13.08 &        \good 5\,370.3 &         24.75 &        5\,979.4 &         14.72 \\
\end{tabular}
\end{table}

\subparagraph*{Evaluation.} Since we executed 10 independent runs of the
algorithm on each graph, \cref{tab:stats:weighted} lists the mean and standard
deviation of the computed number of crossings for each graph. For each graph, we
marked the cell with the lowest number of crossings in green and the largest
number of crossings in blue. For each graph, we used the Mann-Witney-U
test~\cite{sheskin2003handbook} to check the null hypothesis that the crossing
numbers belong to the same distribution. The test indicates that we can reject
the null hypothesis at a significance level of $\alpha=0.01$, for all graphs
with the exception of football, ch7-6-b1 and bcsstk08.  First, observe that the
$\RN$ configuration never computes a drawing with fewer crossings than $\RF$.
Including the football, ch7-6-b1 and the bcsstk08 graphs, $11$ of the drawings
with the fewest crossing were obtained from the $\RF$ configurations. Only $7$
correspond to the $\WF$ configuration.  \cref{tab:stats:dimacs:512} shows that
these graphs have an average vertex-degree of at most 11. Moreover,
\cref{appendix:degree_distr} shows that the degree-distributions of these graphs
follow the power-law. On the other hand, a few of the 8 graph where $\RF$
outperforms $\WF$ also have a small average vertex-degree. 

\begin{figure}
	\centering
	\begin{subfigure}[b]{.3\textwidth}
		\includegraphics[page=1]{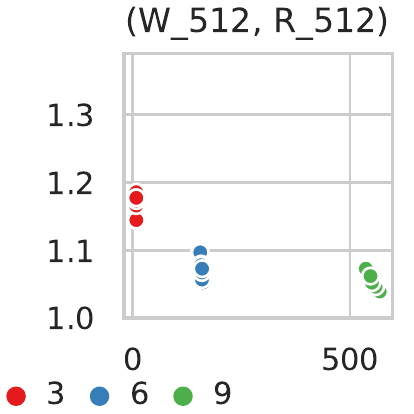}
		\caption{}
		\label{fig:apnx:distr:k_regular:w512_r512}
	\end{subfigure}
	\begin{subfigure}[b]{.3\textwidth}
		\includegraphics[page=2]{plots/k_regular/distr_scatter.pdf}
		\caption{}
		\label{fig:apnx:distr:k_regular:w512_r0}
	\end{subfigure}
	\begin{subfigure}[b]{.3\textwidth}
		\includegraphics[page=3]{plots/k_regular/distr_scatter.pdf}
		\caption{}
		\label{fig:apnx:distr:k_regular:r512_r0}
	\end{subfigure}

	\caption{Comparison of the number of crossing of the $k$-regular graphs
		computed by $\WF$ and $\RF$.}
	\label{fig:distr:k_regular}
\end{figure}

We use  \cref{fig:distr:k_regular} to compare the effect of the vertex-degree on
the number of crossings. The plot follows the same convention as the plots in
\cref{fig:k_regular}. Observe that for each $k$, the $\WF$ configuration
computes drawings with fewer crossings than $\RF$.  The improvement decreases
with an increasing~$k$. The same observation can be made for the comparison of
$\WF$ to $\RN$ but not for the comparison for $\RF$ to  $\RN$, which indicates
that sampling the set of points $P_i$ within the region $R_i$ is indeed too
restrictive, at least on our $k$-regular graphs.

Overall our experimental evaluation shows that even
with a naive uniform random sampling of a set of points in the plane the number of
crossings in drawings of \stress can be reduced considerably. Using a random
sample of a subset of the edges helps to compute drawings with even less
crossings.  The mean-vertex degree and the degree-distributions are good
indicators for whether the restrictive or the weighted sampling of the point set
$P_i$ results in a drawing with the smallest number of crossings.

\section{Conclusion}

In our previous work we showed that the primitive operation of moving a single
vertex to its crossing-minimal position significantly reduces the number of
crossings compared to drawings obtained by \stress. In this paper we introduced
the concept of \emph{bloated dual of line arrangements}, a combinatorial
technique to compute a dual representation of line arrangements. In our
applications of computing drawings with a small number of crossings, this
technique resulted in a speed-up of factor of $20$.  This improvement was
necessary to adapt the approach for graphs with a large number of vertices and
edges. On the other hand, since the worst-case running time is super-quadratic,
this improvement is not sufficient to cope with large graphs.  In
\cref{sec:random_sampling} we showed that random sampling is a promising
technique to minimize crossings in geometric drawings.  In
\cref{sec:apx_cocrossings} we proved that a random subset of edges of size
$\Theta(k \log k)$ approximates the co-crossing number of a vertex $v$ with a
high high probability.  Further, we evaluated three different strategies to
sample a set of points in the plane in order to compute a new position  for the
vertex $v_i$. First, the evaluation confirms that the number of crossings
compared to \stress can be reduced considerably. Furthermore, sampling a small
subset of the edges is sufficient to reduce the number of crossings compared to
a naive sampling of points the plane.  Our evaluation suggests that weighted
sampling is a promising approach to reduce the number of crossings in graphs
with a low average vertex degree.  Otherwise, the evaluation indicates that
restricted sampling results in fewer crossings.

The running time of the vertex-movement approach in combination with the
sampling of the edges mostly depends on the number of vertices. Since a single
movement of a vertex is not optimal anymore, two vertices can be moved
independently.  Thus, future research should be concerned with the question whether
a parallelization over the vertex set is able to further reduce the running time
while preserving the small number of crossings. Moreover, we ask whether it is
sufficient to move a small subset of the vertices to considerably reduce the
number of crossings.
 

\bibliography{strings,rcm}
\iftrue
\clearpage

\appendix

\clearpage
\section{Degree Distribution}
\label{appendix:degree_distr}

The plots in the \crefrange{fig:deg_distr:r512}{fig:deg_distr:rem} show the
degree distribution of the \dimacs and \sparsemc graphs that are listed in
\cref{tab:stats:dimacs:512}. A graph is listed in \cref{fig:deg_distr:r512} if
the configuration $\RF$ computed drawings with clearly less crossings than
$\WF$. In case that $\WF$ computes a drawing with less crossings, then the graph
is listed in \cref{fig:deg_distr:w512}. If no distinction can be made, the graph
is listed in \cref{fig:deg_distr:rem}.  Observe that all graphs in
\cref{fig:deg_distr:r512} tend to have power-law distribution. The plots in
\cref{fig:deg_distr:r512} and \cref{fig:deg_distr:rem} contains distributions
that follow the power-follow but also distributions that tend to be normal or
unstructured. 

\begin{figure}[h!]
	\begin{subfigure}[b]{.3\textwidth}
		\includegraphics[page=1]{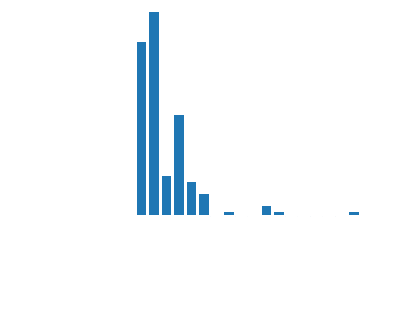}
		\caption{adjnoun}
	\end{subfigure}
	\begin{subfigure}[b]{.3\textwidth}
		\includegraphics[page=1]{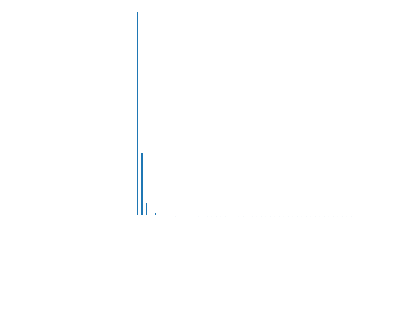}
		\caption{celegans\_metabolic}
	\end{subfigure}
	\begin{subfigure}[b]{.3\textwidth}
		\includegraphics[page=1]{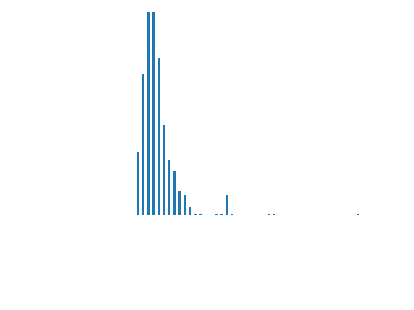}
		\caption{celegansneural}
	\end{subfigure}
	
	\begin{subfigure}[b]{.3\textwidth}
		\includegraphics[page=1]{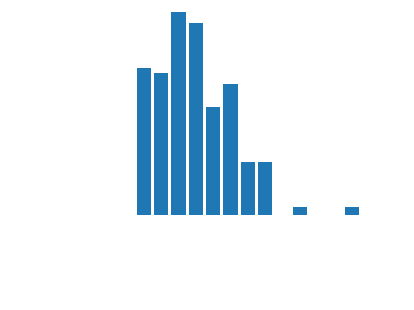}
		\caption{jazz}
	\end{subfigure}
	\begin{subfigure}[b]{.3\textwidth}
		\includegraphics[page=1]{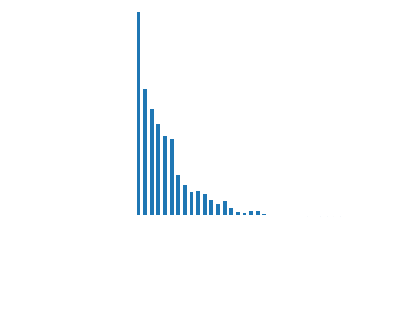}
		\caption{email}
	\end{subfigure}
	\begin{subfigure}[b]{.3\textwidth}
		\includegraphics[page=1]{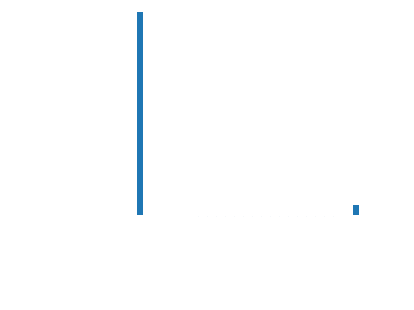}
		\caption{ch7-6-b1}
	\end{subfigure}

	\begin{subfigure}[b]{.3\textwidth}
		\includegraphics[page=1]{plots/degree/ch7-6-b1.pdf}
		\caption{commanche\_dual}
	\end{subfigure}
	\begin{subfigure}[b]{.3\textwidth}
		\includegraphics[page=1]{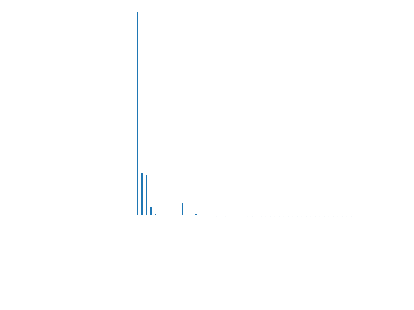}
		\caption{mahindas}
	\end{subfigure}
	\caption{Degree distribution of graphs on which the $\RF$ computes a small
number of crossings.}
\label{fig:deg_distr:r512}
\end{figure}

\begin{figure}
	\begin{subfigure}[b]{.3\textwidth}
		\includegraphics[page=1]{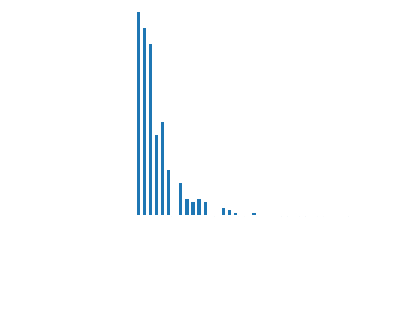}
		\caption{netscience}
	\end{subfigure}
	\begin{subfigure}[b]{.3\textwidth}
		\includegraphics[page=1]{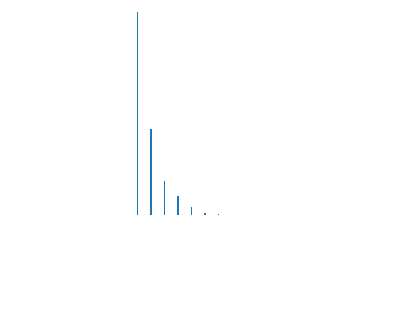}
		\caption{power}
	\end{subfigure}
	\begin{subfigure}[b]{.3\textwidth}
		\includegraphics[page=1]{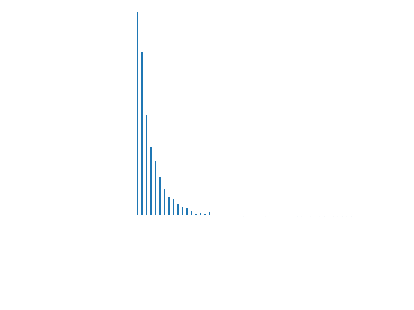}
		\caption{hep-th}
	\end{subfigure}

	\begin{subfigure}[b]{.3\textwidth}
		\includegraphics[page=1]{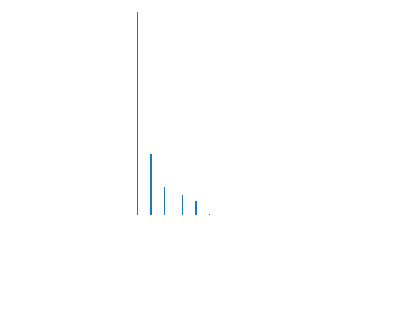}
		\caption{1138\_bus}
	\end{subfigure}
	\begin{subfigure}[b]{.3\textwidth}
		\includegraphics[page=1]{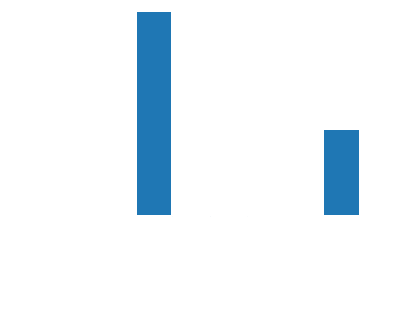}
		\caption{mk9-b2}
	\end{subfigure}
	\caption{Degree distribution of graphs on which the $\WF$ computes a small
number of crossings.}
	\label{fig:deg_distr:w512}
\end{figure}

\begin{figure}
	\begin{subfigure}[b]{.3\textwidth}
		\includegraphics[page=1]{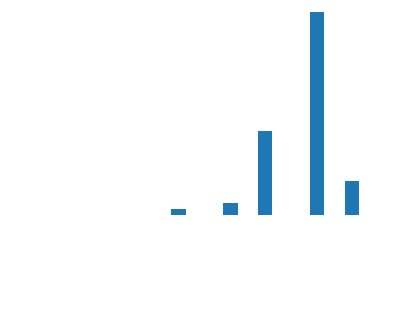}
		\caption{football}
	\end{subfigure}
	\begin{subfigure}[b]{.3\textwidth}
		\includegraphics[page=1]{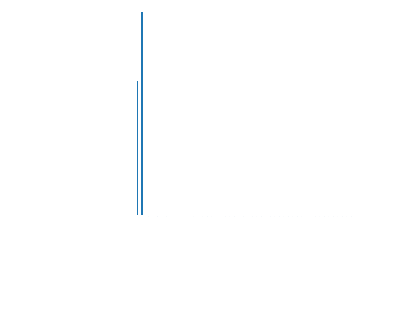}
		\caption{bcsstk08}
	\end{subfigure}
	\begin{subfigure}[b]{.3\textwidth}
		\includegraphics[page=1]{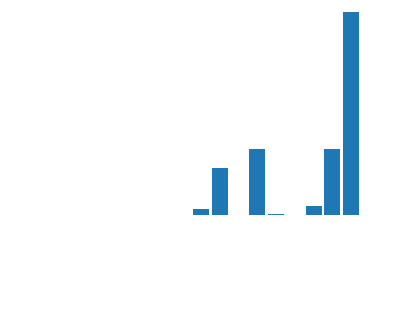}
		\caption{bcsstk27}
	\end{subfigure}
	\caption{Degree distribution of the remaining graphs.}
	\label{fig:deg_distr:rem}
\end{figure}

\clearpage
\section{Statistics of the $k$-regular}
\label{appendix:stats:k_regular}

\cref{tab:stats:k_regular} lists the statistics that correspond to the plots
in \cref{fig:k_regular}.

\begin{table}[h!]
\caption{Mean Number of crossings and standard deviation of number of crossings
in drawings of the $k$-regular graphs computed by \SN and \stress.} 
\label{tab:stats:k_regular}
\begin{subtable}[b]{.9\textwidth}
\caption{\SN vs \stress}
\begin{tabular}{rrrrrrr}
$k$ & \multicolumn{2}{l}{crossings $S_0$} & \multicolumn{2}{l}{crossings stress}\\
    &      mean &   std &             mean &   std \\
\hline
	3 &   10\,402.64 &   258.90 &         12\,487.96 &   384.04 \\
  6 &  169\,365.52 &  2260.86 &        227\,303.68 &  3450.72 \\
  9 &  580\,661.80 &  6333.13 &        774\,791.92 &  8461.29 \\
\end{tabular}
\end{subtable}

\vspace{0.5cm}
\begin{subtable}[b]{.9\textwidth}
	\caption{\SF vs \stress}
\begin{tabular}{rrrrrrr}
deg & \multicolumn{2}{l}{crossings $S_{512}$} &
\multicolumn{2}{l}{stress\_crossings} \\
    &      mean &   std &             mean &   std \\
\hline
  3 &   100\,43.76 &   285.83 &         12\,487.96 &   384.04 \\
  6 &  170\,558.48 &  2379.56 &        227\,303.68 &  3450.72 \\
  9 &  584\,505.16 &  7393.01 &        774\,791.92 &  8461.29 \\
\end{tabular}
\end{subtable}
\end{table}

\section{Missing Proofs}
\label{appendix:missing_proofs}
\obsrelativeapx*

\begin{proof}
	In order to proof the claim, we make a case distinction
	based on the size of $R$. We first assume that
		$\tsize{S \cap X} / \tsize{S} < \tsize{R}/\tsize{X}$.
	Thus, we immediately get that 
		$\tsize{X} \tsize{S \cap R} / \tsize{S} \leq |R| \leq (1+\delta) |R|$
	Moreover, the following holds
	$\tsize{\tsize{S \cap R} / \tsize{S} - \tsize{R} / \tsize{X}} =  \tsize{R} /
	\tsize{X} - \tsize{S \cap R} / \tsize{S}$. Starting from the fact $S$ is
	$(\varepsilon, \delta)$-approximation, we can do the following
	transformations.
	\begin{align*}
		& \size{X} \left(\frac{\size{R}}{\size{X}} - \frac{\size{S \cap R}}{
			\size{S}}\right) \leq
			\delta \tsize{X} \max\left\{ \frac{\size{R}}{\size{X}}, \varepsilon
		\right\} \\
		\Leftrightarrow & \size{R} - \size{X} \frac{\size{S \cap R}}{\size{S}} \leq
			\delta \max\left\{\size{R}, \varepsilon \size{X}  \right\}\\
		\Leftrightarrow & \size{X} \frac{\size{S \cap R}}{\size{S}} \geq
		\size{R}  -\delta \size{R} = (1-\delta) \size{R} 
	\end{align*}

	In order to complete the proof, assume that
		$\tsize{S \cap X} / \tsize{S} \geq \tsize{R}/\tsize{X}$. 
	\begin{align*}
		& \size{X} \left(\frac{\size{S \cap R}}{
			\size{S}}  - \frac{\size{R}}{\size{X}}\right) \leq
			\delta \tsize{X} \max\left\{ \frac{\size{R}}{\size{X}}, \varepsilon
		\right\} \\
		\Leftrightarrow & \size{R} - \size{X} \frac{\size{S \cap R}}{\size{S}} \leq
		\delta \max\left\{\size{R}, \varepsilon \size{X}  \right\}\\
		\Leftrightarrow & \size{X} \frac{\size{S \cap R}}{\size{S}} \leq
		\size{R} + \delta \size{R} = (1+\delta) \size{R}%
	\end{align*}%
\end{proof}

\lemmaapxsum* \begin{proof}
	Recall that $\coCr(p)$ is equal to $\sum_{u \in N(v)} \tsize{
		\coCrEdge{uv}{p} }$.  Since the drawing~$\Gamma$ is $\varepsilon$-well
		behaved, for every $u \in N(v)$ and every $p \in \R^2$ we have that at least
		an $\varepsilon$-fraction of edges is not crossed by the edge $uv$, i.e.,
		$\tsize{\coCrEdge{uv}{p}} \geq \varepsilon|E|$.  Since $S_u$ is a relative
		$(\varepsilon, \delta)$-approximation and due to
		Proposition~\ref{obs:relative_apx}  we have that $ (1-\delta)
		|\coCrEdge{uv}{p}| \leq |E| |\coCrEdge{uv}{p} \cap S_u| / |S_u| \leq
		(1+\delta) |\coCrEdge{uv}{p}|$. Plugging this inequality into the sum of
		$\lambda(p)$ proves the lemma.

\end{proof}

\lemmarandomsample*
\begin{proof}
	For each vertex $u \in N(v)$, we denote with $A_u$ the event that $S$ is a
	relative $(\varepsilon,
	\delta)$-approximation of the set system $(E, \co{ \mathcal F_{uv}})$.
	According to 
	Lemma~\ref{lemma:vc_dimension} and Theorem~\ref{theorem:rel_eps_approx} the
	probability $\Prob(A_u)$ that a
	uniformly random sample is a relative $(\varepsilon, \delta)$-approximation of
	$(E, \co{\mathcal F_{uv}})$ is $1-\gamma$.
	The following estimate can be proven by induction using the equalities
		$\Prob(A \wedge B)  = \Prob(A) + \Prob(B) - \Prob(A \vee B)$
	and
		$\Prob(A \vee B) \leq 1$. 

	\[ \Prob\left( \bigwedge_{u \in N(v)} A_u \right) \geq \sum_{u \in N(v)}
	\Prob(A_u) - k + 1 \] 

	Plugging in the probability for $\Prob(A_u)$ proves that 
	$S$ is a relative $(\varepsilon,
	\delta)$-approximation with probability $1 - k\gamma$ for a $\gamma \leq
	1/ k$.
\end{proof}
\fi
\end{document}